\newcommand{\hide}[1]{}
    \newdimen\origiwspc
    \newdimen\origiwstr
\newcommand\vartextvisiblespace[1][.5em]{%
 \makebox[#1]{%
 \kern.07em
 \vrule height.3ex
 \hrulefill
 \vrule height.3ex
 \kern.07em
 }
}
\newcommand{\smallunderscore}{\texttt{\vartextvisiblespace[.7em]}}
\newcommand{\gbar}{\: | \:}
\newcommand{\gvee}{\: \vee \:}
\newcommand{\gdot}{\: \bullet \:}
\newcommand{\spanner}[1]{{\llbracket #1 \rrbracket }}
\DeclareRobustCommand{\DefVar}[1]{\emph{#1}}
\let\oldnl\nl
\newcommand{\nonl}{\renewcommand{\nl}{\let\nl\oldnl}}
\let\origdoublepage\cleardoublepage
\newcommand{\clearemptydoublepage}{%
 \clearpage{\pagestyle{empty}\origdoublepage}}
\let\cleardoublepage\clearemptydoublepage
\newcolumntype{L}{>{\centering\arraybackslash}m{6cm}}
\DeclareRobustCommand{\BK}[1]{{\sethlcolor{magenta}\hl{B: #1}}}
\let\oldnl\nl
\DeclareMathOperator*{\join}{\bowtie}
\newcommand{\allen}[1]{{\Gamma}_{\!(#1)}}
\newlength\mylen
	\definecolor{dartmouthgreen}{rgb}{0.05, 0.5, 0.06}
\definecolor{carrotorange}{rgb}{0.93, 0.57, 0.13}
\definecolor{dodgerblue}{rgb}{0.12, 0.56, 1.0}
\newcounter{examplec}
\renewenvironment{example}[1][]{
\stepcounter{examplec}
\par\vspace{5pt}\noindent
\fbox{\textbf{Example~\thesection.\theexamplec}}
\hrulefill\par\vspace{10pt}\noindent\rmfamily}
\newcommand{\autour}[1]{\tikz[baseline=(X.base)]\node
[draw=teal,semithick,rectangle,inner sep=2pt, rounded corners=3pt,
scale=0.8] (X) {\strut #1};}
\newcommand{\regtext}[1]{{\texttt{\autour{{#1}}}}}
\newcommand{\regtextinl}[1]{{\small\texttt{\autour{{#1}}}}}
\lstdefinestyle{JInLnStyle}{
 basicstyle=\small\ttfamily,
 frame=none,
 tabsize=4, aboveskip=10pt, belowskip=10pt, %
lineskip=2pt
}
\definecolor{gray}{rgb}{0.4,0.4,0.4}
\definecolor{darkblue}{rgb}{0.0,0.0,0.6}
\definecolor{cyan}{rgb}{0.0,0.6,0.6}
\lstdefinelanguage{XML}
{
 morestring=[b]",
 morestring=[s]{>}{<},
 morecomment=[s]{<?}{?>},
 stringstyle=\color{black},
 identifierstyle=\color{darkblue},
 keywordstyle=\color{cyan},
 morekeywords={xmlns,version,type}
}
\title{Improving Unstructured Data Quality via Updatable Extracted Views}
\author{Besat Kassaie}
{David R. Cheriton School of Computer Science, University of Waterloo, Waterloo, Ontario, Canada, N2L 3G1}{bkassaie@uwaterloo.ca}{}{}
\author{Frank Wm. Tompa}
{David R. Cheriton School of Computer Science, University of Waterloo, Waterloo, Ontario, Canada, N2L 3G1}{fwtompa@uwaterloo.ca}{}{}
\authorrunning{B. Kassaie and F. W. Tompa}
\keywords{Stable extractors, Regular expressions with capture variables, Document spanners, Static analysis, Database view updates, AQL, SystemT}
\begin{document}

\maketitle

\begin{abstract}
 Improving data quality in unstructured documents is a long-standing challenge. Unstructured data, especially in textual form, inherently lacks defined semantics, which poses significant challenges for effective processing and for ensuring data quality. We propose leveraging information extraction algorithms to design, apply, and explain data cleaning processes for documents. Specifically, for a simple document update model, we identify and verify a set of sufficient conditions for rule-based extraction programs to qualify for inclusion in our document cleaning framework. Through experiments conducted on medical records, we demonstrate that our approach provides an effective framework for identifying and correcting data quality problems, thereby highlighting its practical value in real-world applications.
\end{abstract}

\section{Introduction}
Data cleaning represents a significant expense within data processing pipelines~\cite{costofdatacleaning}. Recent research efforts have focused extensively on devising effective methods for cleaning structured data. In contrast, collections of unstructured data, such as medical reports, tend to exhibit more inconsistencies and errors compared to structured databases. This is primarily due to the high variability of unstructured data, which makes it more prone to human error. Nevertheless, current data cleaning techniques for unstructured data are often inadequate and primarily embedded within other processing steps~\cite{5917038,DBLP:conf/dmdw/GalhardasFSSS01, DBLP:conf/cbms/DeshpandeRTFRA20, woo2019application, DBLP:conf/fdse/NguyenPNVTS20, DBLP:journals/jdiq/RoyMG18}. For every downstream task, a clean version of the data is prepared in a suitable format, such as a bag of tokens or extracted relations. The cleaned data can then be used in applications of interest, but this approach leaves the data sources uncleaned. Therefore, a similar cleaning process must be initiated from scratch for each new application. For example, Deshpande et al.~\cite{DBLP:conf/cbms/DeshpandeRTFRA20} improve the performance of an information retrieval system by cleaning unstructured medical data sources. Their cleaning process is seamlessly integrated into their application-specific pipeline and thus not easily adoptable by others. Even though they extract elements from the data sources and clean them for retrieval purposes, they do not produce cleaned documents.

In this work, we propose instead to clean documents and use the cleaned version in applications. To this end, we recommend to use document-at-a-time information extractors to create a relational view over documents. With an adequate extractor, data quality failures in documents can be revealed in extracted views, cleaning can then be applied over the extracted items, and finally the cleaned items can be transferred back into the source documents (or a copy if the original is to be preserved). To this end, we consider a document ``clean'' if its corresponding extracted view is clean. 

Although this process seems intuitive, extractors employed for this purpose must satisfy specific conditions. For example, consider correcting an error in one cell of the extracted view. After transferring the correct value back to the document, we expect to see the correct version of that specific cell when re-running the same extractor, and we expect all other values to remain unchanged (Fig.~\ref{fig:exvu}). Without such a guarantee, we might not converge to a corresponding clean state, because each update to the view might create unpredictable side-effects in the document. Therefore, the ability to predict the content of the extracted view after transferring changes to the source documents becomes imperative to maintain the predictability of the entire process. 
\begin{figure}[ht]
 \centering
\includegraphics[width=0.50\columnwidth]{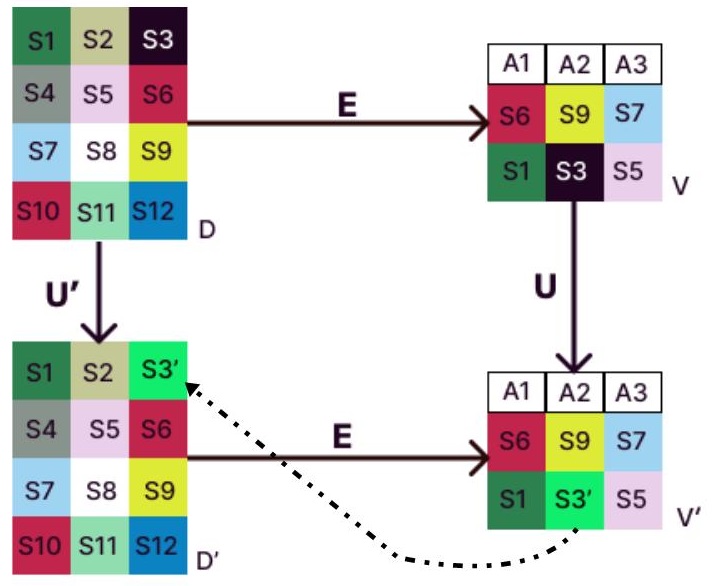}
\caption{Applying extractor $E$ to document $D$ produces table $V$. Updating $V$ to form $V'$, mapping the update back to form $D'$, and then re-applying $E$ should produce that same updated table $V'$.}
\label{fig:exvu}
\end{figure}

In this paper, we develop solutions for rule-based information extraction systems, such as GATE~\cite{cunningham2014developing} and SystemT~\cite{DBLP:conf/icde/ReissRKZV08}, that are based on the theory of regular languages.\footnote{We explain the rationale behind our decision to focus on rule-based extractors in Section~\ref{subsubsec:whyrulebase}.}
 In such systems, rules use regular expressions to describe not only values in the document corpus that are to be extracted but also strings that serve as contexts for the values to be extracted. AQL~\cite{chiticariu2010systemt}, the rule-based language underlying SystemT, benefits from a formal model, namely \textit{document spanners}~\cite{DBLP:journals/jacm/FaginKRV15}, which facilitates analysis of extraction programs.

\subsection{Novelty and Contributions}
This work is novel in two key aspects. First, it addresses the unexplored and challenging problem of document cleaning, a topic that has received limited attention in the literature despite its significance. Second, the approach proposed in this paper is innovative and has not been attempted before, offering a fresh perspective on solving similar problems.
The fundamental realization that information extraction can be viewed as a mechanism for document database management facilitates  a variety of improvements in text processing similar to the ones identified in relational database views. 
In the context of unstructured data management, this work is innovative in identifying, formulating, and addressing the challenges associated with updating extracted views. We leverage the potential of updatable extracted views to devise a solution aimed at enhancing unstructured data quality. We anticipate that future research in this under-explored field will focus on uncovering and solving critical challenges, driving significant advancements in unstructured data quality. 
To this end, 
\begin{itemize}
  \item We identify and formalize the extracted view update problem. 
  \item We formalize a general view update model, i.e., \textit{domain preserving updates} along with an intuitive update translation mechanism.
  \item We introduce a sufficient property called \textit{stability} of extraction programs, for which we are able to translate view updates to document updates.
  \item We present a verification process that can be applied to rule-based extractors to determine whether the program is stable.
  \item We show how stable extractors can be used in a document cleaning pipeline, presenting experimental results that demonstrate how stable extractors can be used to improve the quality of medical documents.
\end{itemize}

\section{Extracted Relations as Materialized Views}\label{sec:ERasMV}
We propose that an extracted relation should be treated as a materialized view of a document corpus~\cite{DBLP:conf/doceng/KassaieT20, kassaie2023update}. To this end, we require extractors to have four general characteristics: i) \DefVar{DAAT}: tuples are extracted from individual documents (``document-at-a-time''); ii) \DefVar{strict}: for every possible input document the set of extracted values in the corresponding record is a subset of words and phrases appearing in that document. That is, each extracted value comes from a continuous span in some document. Hence, a hypothetical extractor that mines the input text and infers information that does not appear explicitly in the corpus is not a strict extractor;\footnote{We leave to future research the exploration of whether strictness can be extended to inferences realized through a bijective function, mapping words or phrases in the input documents (such as birth dates) to other values present in the extracted relation (such as ages).} iii) \DefVar{computable}: for all possible input documents and corresponding extracted values, the extractor also provides the positions from which values are extracted. Some extraction mechanisms directly extract positions from text while considering corresponding string values as by-products of extraction process, whereas others do the reverse; iv) \DefVar{deterministic}: for every possible set of input documents, the set of extracted values remains consistent across multiple runs of the extractor; some extractors, such as those modelled by document spanners, are inherently deterministic, but others, such as some programs written in JAPE, may depend on implementation, time-of-day, randomization seeds, or other external characteristics. 

 \begin{figure}[ht]
 \centering
\includegraphics[width=0.7\columnwidth]{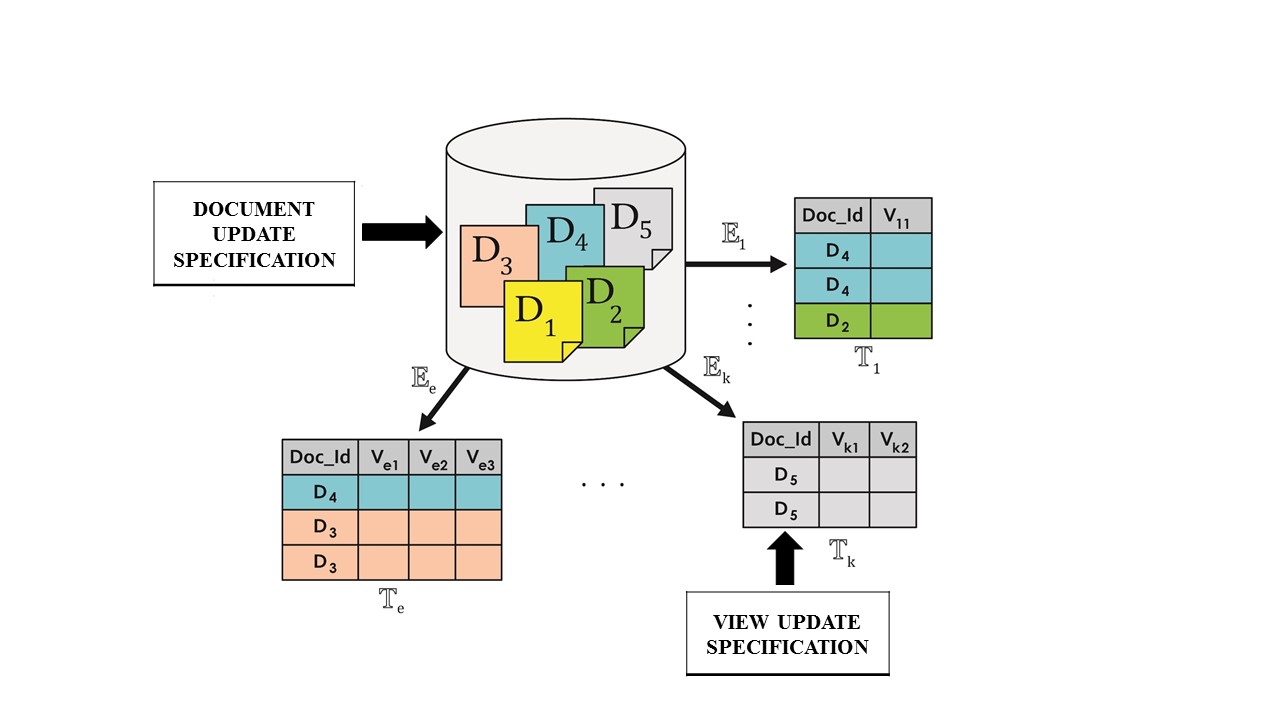}
\caption{Extraction system that supports updates to source documents as well as extracted views.} 
\label{fig:docDB}
\end{figure}

Our hypothetical extraction system comprises five components: a collection of documents  $\mathbb{D}$, a set of extraction programs $\{\mathbb{E}_1, \cdots, \mathbb{E}_e\}$ that run independently over $\mathbb{D}$, a corresponding set of extracted relations, an instance of a document update specification, and an instance of a view update specification (Figure~\ref{fig:docDB}).
The union of  relations produced by an extractor $\mathbb{E}_k$ against the document database is stored in a relation $\mathbb{T}_k$ that includes an additional column to associate each document identifier with the spans for the corresponding span relation.
 These tables serve as materialized views of the document database.

Because our extractors are deterministic and operate on one document at a time, each one defines a function from documents to relations. Because we also require extractors to be strict and computable, each cell in an extracted relation can be interpreted as a \DefVar{document span}, i.e., the start and end offsets within the document identified by the corresponding $Doc\_id$, together with the string value appearing within those offsets in the document. In practice, both the spans and their corresponding string values need not be displayed explicitly, but rather the values might be derived on demand from the spans in a \DefVar{span relation} or the spans might be hidden and only the values displayed in an \DefVar{extracted string relation}.

Treating extraction algorithms as a view mechanism allows extractors to be adopted in a broad range of applications. For instance, when extraction time is a bottleneck and updates to source documents occur frequently, it has been proposed to update extracted relations incrementally~\cite{DBLP:conf/icde/ChenDYR08} or to avoid re-extraction altogether~\cite{DBLP:journals/pvldb/KassaieT23}. 
Similarly, finding and repairing violations of constraints in a large corpus of uncleaned documents is a difficult problem: an extractor that guarantees the preservation of consistency between the source text and the extracted relations can be adopted to solve this problem by mapping a cleaned view back to the underlying documents. This latter problem is well-motivated and addressed in the relational setting~\cite{DBLP:journals/jacm/CosmadakisP84,DBLP:books/sp/kimrb85/FurtadoC85,DBLP:conf/pods/Keller85}. However, due to the diverse range of extraction techniques and inherent complexity in text processing and understanding, tackling these problems for collections of documents introduces new challenges.

\begin{figure}[ht]
    \centering
    
    \begin{subfigure}[t]{\textwidth} 
        \centering
        \begin{adjustbox}{max width=0.98\linewidth} 
            \renewcommand{\arraystretch}{1} 
            \setlength\tabcolsep{1pt} 
       \begin{tabular}{ccccccccccccccccccccccccccccccccccccccccc}

O&n&{\smallunderscore}&{{0}}& {{{3}}}&{{/}}&{{2}}&{{4}}&,&{\smallunderscore}& {w}&e&{\smallunderscore}&r&e&n&t&e&d&{\smallunderscore}&a&n&d&{\smallunderscore}&w&a&t&c&h&e&d&{\smallunderscore}&`&\textcolor{red}{\textbf{M}}&\textcolor{red}{\textbf{I}}&\textcolor{red}{\textbf{B}}&\textcolor{black}{\textbf{'}}&.&{\smallunderscore}&I&{\smallunderscore}\\
\tiny1&\tiny2&\tiny3&\tiny4&\tiny5&\tiny6&\tiny7&\tiny8&\tiny9&\tiny10&\tiny11&\tiny12&\tiny13&\tiny14&\tiny15&\tiny16&\tiny17&\tiny18&\tiny19&\tiny20&\tiny21&\tiny22&\tiny23&\tiny24&\tiny25&\tiny26&\tiny27&\tiny28&\tiny29&\tiny30&\tiny31&\tiny32&\tiny33&\tiny34&\tiny35&\tiny36&\tiny37&\tiny38&\tiny39&\tiny40&\tiny41\\
h&i&g&h&l&y&{\smallunderscore}&r&e&c&o&m&m&e&n&d&{\smallunderscore}&i&t&{\smallunderscore}&a&s&{\smallunderscore}&a&n&{\smallunderscore}&i&n&s&p&i&r&i&n&g&{\smallunderscore}&a&n&d&{\smallunderscore}&h\\
\tiny42&\tiny43&\tiny44&\tiny45&\tiny46&\tiny47&\tiny48&\tiny49&\tiny50&\tiny51&\tiny52&\tiny53&\tiny54&\tiny55&\tiny56&\tiny57&\tiny58&\tiny59&\tiny60&\tiny61&\tiny62&\tiny63&\tiny64&\tiny65&\tiny66&\tiny67&\tiny68&\tiny69&\tiny70&\tiny71&\tiny72&\tiny73&\tiny74&\tiny75&\tiny76&\tiny 77&\tiny78&\tiny79&\tiny80&\tiny81&\tiny82\\
u&m&o&r&o&u&s&{\smallunderscore}&f&i&l&m&{\smallunderscore}&t&o&{\smallunderscore}&e&n&j&o&y&.\\
\tiny83&\tiny84&\tiny85&\tiny86&\tiny87&\tiny88&\tiny89&\tiny90&\tiny91&\tiny92&\tiny93&\tiny94&\tiny95&\tiny96&\tiny97&\tiny98&\tiny99&\tiny100&\tiny101&\tiny102&\tiny103&\tiny104\\

 \end{tabular}
        \end{adjustbox}
       \caption{Original document.}
      \label{fig:toyexample1A}
    \end{subfigure}
    \hfill 
~
    \begin{subfigure}[t]{\textwidth} 
        \centering
        \begin{adjustbox}{max width=0.98\linewidth} 
            \renewcommand{\arraystretch}{1}
            \setlength\tabcolsep{1pt}
      \begin{tabular}{ccccccccccccccccccccccccccccccccccccccccc}
O&n&{\smallunderscore}&0&3&/&2&4&,&{\smallunderscore}&w&e&{\smallunderscore}&r&e&n&t&e&d&{\smallunderscore}&a&n&d&{\smallunderscore}&w&a&t&c&h&e&d&{\smallunderscore}&`&\textcolor{dartmouthgreen}{\textbf{M}}&\textcolor{dartmouthgreen}{e}&\textcolor{dartmouthgreen}{\textbf{n}}&\textcolor{dartmouthgreen}{{\smallunderscore}}&\textcolor{dartmouthgreen}{\textbf{i}}&\textcolor{dartmouthgreen}{\textbf{n}}&\textcolor{dartmouthgreen}{\smallunderscore}&\textcolor{dartmouthgreen}{\textbf{B}}\\
\tiny1&\tiny2&\tiny3&\tiny4&\tiny5&\tiny6&\tiny7&\tiny8&\tiny9&\tiny10&\tiny11&\tiny12&\tiny13&\tiny14&\tiny15&\tiny16&\tiny17&\tiny18&\tiny19&\tiny20&\tiny21&\tiny22&\tiny23&\tiny24&\tiny25&\tiny26&\tiny27&\tiny28&\tiny29&\tiny30&\tiny31&\tiny32&\tiny33&\tiny34&\tiny35&\tiny36&\tiny37&\tiny38&\tiny39&\tiny40&\tiny41\\
\textcolor{dartmouthgreen}{\textbf{l}}&\textcolor{dartmouthgreen}{\textbf{a}}&\textcolor{dartmouthgreen}{\textbf{c}}&\textcolor{dartmouthgreen}{\textbf{k}}&'&{\textbf{.}}&{{\smallunderscore}}&I&\smallunderscore&{h}&i&g&h&l&y&\smallunderscore&r&e&c&o&m&m&e&n&d&\smallunderscore&i&t&{\smallunderscore}&a&s&{\smallunderscore}&a&n&{\smallunderscore}&i&n&s&p&i&r\\
\tiny42&\tiny43&\tiny44&\tiny45&\tiny46&\tiny47&\tiny48&\tiny49&\tiny50&\tiny51&\tiny52&\tiny53&\tiny54&\tiny55&\tiny56&\tiny57&\tiny58&\tiny59&\tiny60&\tiny61&\tiny62&\tiny63&\tiny64&\tiny65&\tiny66&\tiny67&\tiny68&\tiny69&\tiny70&\tiny71&\tiny72&\tiny73&\tiny74&\tiny75&\tiny76&\tiny 77&\tiny78&\tiny79&\tiny80&\tiny81&\tiny82\\
i&n&g&\smallunderscore&a&n&d&\smallunderscore&h&u&m&o&r&o&u&s&\smallunderscore&f&i&l&m&\smallunderscore&t&o&{\smallunderscore}&e&n&j&o&y&.\\
\tiny83&\tiny84&\tiny85&\tiny86&\tiny87&\tiny88&\tiny89&\tiny90&\tiny91&\tiny92&\tiny93&\tiny94&\tiny95&\tiny96&\tiny97&\tiny98&\tiny99&\tiny100&\tiny101&\tiny102&\tiny103&\tiny104&\tiny105&\tiny106&\tiny107&\tiny108&\tiny109&\tiny110&\tiny111&\tiny112&\tiny113\\
 \end{tabular}

        \end{adjustbox}
        \caption{Updated document.}
     \label{fig:toyexample1C}
    \end{subfigure}

    \caption{A sample input document and its updated version, with associated offsets indicated beneath each character, starting from $1$. The substrings in \textbf{\textcolor{red}{red}} undergo updates and those highlighted in \textbf{\textcolor{dartmouthgreen}{green}} represent new values.}
    \label{fig:toyexample1}
\end{figure}
\section{Motivating Example}
\label{exmp:motivating1}
We wish to replace any shortened movie name in a document collection with the corresponding full title before releasing the data. For simplicity, we consider  any string  a movie name if it is surrounded by  \regtextinl{`} \regtextinl{'} \footnote{Throughout this paper, characters to be matched in a document are represented \regtext{like this} to distinguish them from other text.} and appears in a sentence with terms such as \regtextinl{watched}  or \regtextinl{saw}, where words are considered to be in the same sentence if there are no end punctuation characters such as \regtextinl{?} or \regtextinl{.} between them. 

Let $\Sigma$ represent the set of Latin alphanumeric characters, punctuation and space characters (the last represented by \regtextinl{\smallunderscore}). The extraction semantics can be represented using a regular expression with variables. We use variables, such as $A$ (for Action) and $M$ (for Movie) in this example, to mark each matched substring:
\small
$$\gamma_{mv}=\Sigma^* \regtext{\smallunderscore}A\{(\regtext{watched} \vee \regtext{saw})\}\gamma_{b}\gamma_{s}^*\regtext{\smallunderscore`}M\{(\gamma_u \vee \gamma_l)\gamma_{m}^*\}\regtext{'} (\gamma_{b} \vee \regtext{?} \vee \regtext{.})\Sigma^*$$

\normalsize
\noindent where $\gamma_{u} = \footnotesize [\regtextinl{A},\regtextinl{Z}]$, $\gamma_{l}= \footnotesize [\regtextinl{a},\regtextinl{z}]$, $\gamma_b = \regtextinl{\smallunderscore} \vee \regtextinl{,}$, $\gamma_{m}=\gamma_u \vee \gamma_l \vee \regtextinl{\smallunderscore}$, and $\gamma_{s} = \Sigma - \regtextinl{?} - \regtextinl{.}$. In this paper, we adhere to the ``all matches'' semantics for regular expression matching, and for this example we assume that string values are assigned to variables and get extracted for every match.

After applying the extractor to the input text in Figure~\ref{fig:toyexample1A}, we replace the movie name \regtextinl{MIB}\footnote{https://en.wikipedia.org/wiki/Men\_in\_Black\_(film\_series)} in the extracted string relation (Figure~\ref{fig:toyexampleext1CString}) with its complete form \regtextinl{Men\smallunderscore in\smallunderscore Black} resulting in Figure~\ref{fig:toyexampleext1DString}. 

Now, if both the extracted relation and the source documents are to be two representations of the same information, they need to stay consistent. Consequently, when updates are made to an extracted relation, a translation mechanism is required to reflect new values back into the source document. In this paper, we adopt an intuitive approach to the translation process: we replace the old values with the new values in their corresponding positions within the source document.

After applying this simple update to the source document in Figure~\ref{fig:toyexample1A}, we get the document in Figure~\ref{fig:toyexample1C}. Re-running the extractor over the updated document would produce the updated relation in Figure~\ref{fig:toyexampleext1DString}, that is: I) \regtextinl{Men\smallunderscore in\smallunderscore Black} is extracted, II) no new rows appear, III) no rows disappear, and IV) no other cells' values have changed. 
It is important to note that in practical scenarios, we often work with complex extractors that involve multiple, and potentially conflicting, expressions. In this work, our objective is to establish whether, for a given extractor, it is possible to apply any valid update to the relation (whether to improve data quality, effect anonymization, or for any other reason) and reflect the update back in the source document, regardless of the remaining content of the source document.

\begin{figure*}[t!]
 \centering
 
\begin{subfigure}[t]{0.45\linewidth}
 \centering
\begin{tabular}{cc}
 A&M\\
 \hline
watched &MIB\\
\end{tabular} 
 \caption{Original extracted string relation.}
 \label{fig:toyexampleext1CString}
\end{subfigure}
~
\begin{subfigure}[t]{0.45\linewidth}
 \centering
\begin{tabular}{cc}
 A&M\\
 \hline
watched & Men \smallunderscore in \smallunderscore Black\\
\end{tabular} 
 \caption{Updated string relation.}
 \label{fig:toyexampleext1DString}
\end{subfigure}

 \caption{Extracted relation and its updated version for motivating example.}
 \label{fig:toyexampleext1}
\end{figure*}

In our document cleaning framework, users design cleaning-specific stable extractors (see Section~\ref{sec:stableExtr}) to identify items of interest corresponding to data quality concerns\footnote{Data cleaning is a specific task within the broader field of data quality. In this paper, we use the terms "data cleaning" and "data quality" interchangeably unless otherwise specified.} within the document database. Based on these extractors, the extracted values, and possibly some predefined data cleaning rules, the user applies a cleaning process to the extracted view. Finally, the system translates the updates made in the extracted view back to the original documents (Figure~\ref{fig:cleaning}). This framework introduces an update translation mechanism (Step 3) that translates updates applied to extracted relations (Step 2) into updates to the original documents. A verification process based on static program analysis conducted beforehand ensures the feasibility of this translation. Users responsible for data quality may use any cleaning algorithms on the relations (Step 2) for extracted relations as long as they respect the conditions set forth by the framework (see Section~\ref{sec:updateModel} and~\ref{def:respect} in Section~\ref{sec:stableExtr}).
 \begin{figure}[ht]
 \centering
\includegraphics[width=0.7\columnwidth]{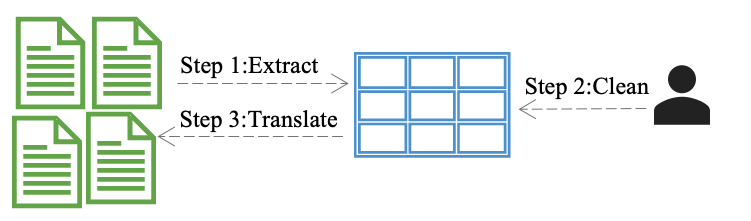}
\caption{ Overview of Proposed Document Cleaning Framework. } 
\label{fig:cleaning}
\end{figure}

\section{Computational model for extractors}

Fagin et al.~\cite{DBLP:journals/jacm/FaginKRV15} define a formal model for representing extractors written in AQL~\cite{chiticariu2010systemt}. In this section, we summarize the basics that we need to describe our work here.\footnote{The text in this section is substantially the same as what can be found in our previous formulations of extractors~\cite{DBLP:journals/pvldb/KassaieT23,kassaie2023update, DBLP:conf/doceng/KassaieT20}.}

\subsection{Regular Expressions with Capture Variables}\label{sec:regex}
Given a finite alphabet $\Sigma$, a regular expression extended using variables chosen from a set $V$ is called a \DefVar{regex with capture variables} and conforms to $\gamma$ in the grammar $G_S(\Sigma,V)$ as follows:
\begin{equation}
\begin{aligned}
\gamma\: & :=\: \varnothing \gbar\epsilon \gbar \alpha \gbar (\gamma \gvee \gamma) \gbar (\gamma \gdot \gamma) \gbar (\gamma)^{*} \gbar x\{\gamma\} \\
\alpha\: & :=\: \sigma \gbar [\sigma,\sigma] \gbar \delta \\
\delta\: & :=\: \Sigma \gbar (\delta - \sigma)
\end{aligned}
\label{eq:reg_exp_var}
\end{equation}
\noindent where $\sigma$ represents any character in $\Sigma$, $[\sigma,\sigma]$ represents the disjunction of characters having their encoding between or equal to the encodings of the first and second character in the range, the terminal symbol $\Sigma$ represents the disjunction of all characters in $\Sigma$, $\delta - \sigma$ represents the disjunction of all characters in $\delta$ with the exception of $\sigma$, and (what distinguishes these expressions from conventional regular expressions) $x$ represents any variable in $V$.
Given a regex with capture variables $r$, the corresponding \DefVar{regex tree} $\mathcal{T}(r)$ represents the hierarchical structure of $r$, in which the tree's leaves have labels $\varnothing$, $\epsilon$, characters in $\Sigma$, character ranges in $\Sigma$, or the character $\Sigma$ itself, and internal nodes have labels $\bullet$, $\vee$, $*$, $-$, or a symbol in $V$. 
For convenience of notation, when writing a regex with capture variables, we follow common practice for regular expressions in allowing the following shorthand: omission of parentheses (relying instead on left associativity of all operations and precedence of $-$ over $*$ over $\bullet$ over $\vee$) and omission of the operator $\bullet$. 

If $E$ is a regex with capture variables, then we denote the set of capture variables in $E$ as $\mathit{SVars}(E)$. The use of a subexpression of the form $n\{g\}$ in $E$ signifies that whenever $E$ matches a string, the span containing a substring matched by $g$ is to be \emph{marked by the capture variable $n$}. We distinguish two classes of variables  based on their relative positioning: a variable  $x \in \mathit{SVars}(E)$ is \DefVar{exposed} if it is not enclosed in any other variables, otherwise it is \DefVar{nested}. In this paper, we assume that every regex with a capture variable has at least one exposed variable.

\subsection{Document Spans}
A document $D$ is a finite string over some alphabet: $D \in \Sigma^{*}$. A \DefVar{span} of document $D$, denoted $[i, j \rangle$ ($1 \leq i \leq j \leq |D|+1$), specifies the start and end offsets of a substring in $D$, which is in turn denoted $D_{[i, j \rangle}$, and extends from offset $i$ through offset $j-1$.  $[i, i \rangle$ denotes an empty span at offset $i$. Spans $s_1 = [i_1, j_1 \rangle$  and $s_2 = [i_2, j_2 \rangle$ are identical if and only if $i_1=i_2$ and $j_1=j_2$.

\begin{table*}
    \caption{Allen's interval relationships extended to spans.}
    \label{tab:Allen}
   \begin{minipage}{\columnwidth}
    \begin{center}\small
       \begin{tabular}{rlc@{}l}
        1&$\allen{X < Y}$ & \emph{X precedes Y} & \multirow{2}{*}{$ \bigg \} \Sigma^{*}X \{ \Sigma^{*}\} \Sigma^{+} Y \{ \Sigma^{*} \} \Sigma^{*}$}\\
    
        2&$\allen{Y > X}$ & \emph{Y is preceded by X} &\multirow{2}{*}{} \\
    &&&\\
    3&$\allen{X \mathbf{m} Y}$ & \emph{X meets Y} & \multirow{2}{*}{$\bigg\}\Sigma^{*}X\{\Sigma^{+}\}Y\{\Sigma^{+}\}\Sigma^{*}$}\\
        4&$\allen{Y \mathbf{mi} X}$ & \emph{Y is met by X} &\multirow{2}{*}{}  \\
          &&&\\
        5&$\allen{X \mathbf{o} Y}$ & \emph{X overhangs Y} & \multirow{2}{*}{$\bigg\}\Sigma^{*}(X \vdash)\Sigma^{+}(Y \vdash)\Sigma^{+}(\dashv X)\Sigma^{+}(\dashv Y)\Sigma^{*}$}\footnote{This abuse of notation represents a spanner (as described in Section~\ref{sec:Extractors}) matching an automaton with operators that open and close the variables $X$ and $Y$ as indicated.}\\
        6&$\allen{Y \mathbf{oi} X}$ & \emph{Y is overhung by X} &\multirow{2}{*}{} \\
          &&&\\
        7&$\allen{X \mathbf{d} Y}$ & \emph{X during Y} & \multirow{2}{*}{$\bigg\}\Sigma^{*}Y\{\Sigma^{+}X\{\Sigma^{*}\}\Sigma^{+}\}\Sigma^{*}$}\\
        8&$\allen{Y \mathbf{di} X}$ & \emph{Y contains X}  \\
          &&&\\
        9&$\allen{X \mathbf{s} Y}$ & \emph{X starts Y} & \multirow{2}{*}{$\bigg\}\Sigma^{*}Y\{X\{\Sigma^{*}\}\Sigma^{+}\}\Sigma^{*}$}\\
        10&$\allen{Y \mathbf{si} X}$ & \emph{Y is started by X}  \\
          &&&\\
        11&$\allen{X \mathbf{f} Y}$ & \emph{X finishes Y} & \multirow{2}{*}{$\bigg\}\Sigma^{*}Y\{\Sigma^{+}X\{\Sigma^{*}\} \}\Sigma^{*}$}\\
        12& $\allen{Y \mathbf{fi} X}$ & \emph{Y is finished by X}  \\
          &&&\\
          13&$\allen{X = Y}$ & \emph{X is equal to Y} & {$\Sigma^{*}X\{Y\{\Sigma^{*}\}\}\Sigma^{*} \gvee \Sigma^{*}X\{\epsilon\}Y\{\epsilon\}\Sigma^{*}$}\\
   \end{tabular}
  
\end{center}
\end{minipage}
\end{table*}

\subsection{Verification} \label{sec:verU2V}
A substantial portion of our work is based on investigating various relationships between spans. 
Allen has defined a set of 13 possible relationships between non-empty intervals~\cite{DBLP:journals/cacm/Allen83}. These can be extended to capture the same basic relationships among spans (including empty spans) as summarized in Table~\ref{tab:Allen}. All possible relationships among spans can be described by disjunctions of these basic relationships; for example, ``X overlaps Y'' ($\allen{X\cap Y}$) can be expressed as the disjunction of the last nine basic relationships.\footnote{The definition of overlapping spans given by Fagin et al.~\cite{DBLP:journals/jacm/FaginKRV15} is asymmetric for empty spans, i.e., given a span $[i,j\rangle$ the empty span at $[i,i\rangle$ is considered overlapping with $[i,j\rangle$ while the empty span at  $[j,j\rangle$ is considered disjoint from $[i,j\rangle$. We treat both as overlapping.}  ``X overlaps but is not equal to Y'' ($\allen{X\Cap Y}$) can be similarly expressed as the disjunction of the fifth through the twelfth basic relationships.

\subsection{Rule-based Extractors Modeled as Document Spanners} \label{sec:Extractors}

Determining membership in a language defined by a regex with capture variables $E$ can be accomplished by executing a corresponding \DefVar{vset-automaton} $\mathcal{A}(E)$. 

Given $E$ and $V = \mathit{SVars}(E)$, $\mathcal{A}(E)$ is a non-deterministic finite state automaton augmented with a designated set (initially empty) and two operators for each variable $x \in V$, namely $x \! \vdash$ (``open $x$'') and $\dashv \! x$ (``close $x$''). 
Besides including standard character transitions, $\mathcal{A}(E)$ also includes \emph{operation transitions} that, instead of consuming a character from the input string, insert the variable $x$ into the designated set if the transition label is $x \! \vdash$ and remove $x$ from the designated set if the label is $\dashv \! x$. A document $D$ is accepted by $\mathcal{A}(E)$ if, after scanning the whole input, we end up in a final state and 
the designated set is empty. A \DefVar{matching} of $E$ against document $D$ is an accepting run in $\mathcal{A}(E)$, where for each variable $x \in \mathit{SVars}(E)$, the spans marked by $x$ each begin with the offset in $D$ when $x$ is inserted into the designated set and end with the offset in $D$ when $x$ is removed from that set. 

If $E$ is a regex with capture variables, it specifies a \DefVar{document spanner}, denoted as $\spanner{E}$, with $\mathit{SVars}(\spanner{E}) = \mathit{SVars}(E)$. Applying a document spanner to a document $D$ produces a \DefVar{span relation}, i.e., a relation that contains spans of $D$. Thus, $\spanner{E}$ is a function mapping strings over $\Sigma^{*}$ to  $\mathcal{S}^{|\mathit{SVars}(E)|}$ where $\mathcal{S}$ is the set of all spans of $D$. To ensure that the span relation is in first-normal form with no null values, we restrict our attention to a specific class of document spanners, namely \DefVar{functional document spanners}, that mark exactly one span for each variable for all accepting runs, regardless of the input document $D$. In particular, for a given document $D$, the spanner specified by $E$ produces a span relation $\spanner{E}(D)$ in which there is one column for each variable from $\mathit{SVars}(E)$ appearing in $E$, each row corresponds to a matching of $E$ against $D$ when the variables are ignored, and the value in a row for the column corresponding to $x \in \mathit{SVars}(E)$ is the span marked by $x$.

\begin{definition}Throughout this paper, a functional document spanner used for the purpose of information extraction is called an \DefVar{extraction spanner}. An extraction spanner is specified by an algebraic expression over \DefVar{extraction formulas}, each of which uses a \DefVar{regular expression with capture variables} to mark which parts of a matched document are to be extracted into a span relation, which is called an \DefVar{extracted relation}. 
\end{definition} \label{def:extraction_spanner}

\begin{example} 
Let $\Sigma$ be the set of Latin alphanumeric, punctuation, and space characters (the last represented by \regtextinl{\smallunderscore}). Note that this is the setting for all future examples unless otherwise stated. $\gamma_{date}$ is a regular expression with capture variables\: \small
\[\gamma_{date} =\Sigma^*\regtext{mdate="}\:F\{Y\{\gamma_{d}\gamma_{d}\gamma_{d}\gamma_{d}\}\regtext{-}M\{\gamma_{d}\gamma_{d}\}\regtext{-}D\{\gamma_{d}\gamma_{d}\}\}\regtext{"}\Sigma^*\] \normalsize
where $\gamma_{d}=[\regtextinl{0},\regtextinl{9}]$ and $\mathit{SVars}(E)=\{F,Y,M,D\}$. $F$ is the only exposed variable in $\mathit{SVars}(E)$. For every successful matching of $\gamma_{date}$ against a string, a sub-string matching $\gamma_{d}\gamma_{d}\gamma_{d}\gamma_{d}$ is marked by the capture variable $Y$. We extend conventional set notation to write $Y \subset F$ and $\regtextinl{-} \in F$.

\end{example}\label{exmp:gamma-date}

Let $S$, $S_1$, and  $S_2$ be extraction spanners where the last two are union-compatible (i.e., $\mathit{SVars}(S_1) = \mathit{SVars}(S_2)$); $X \subseteq \mathit{SVars}(S)$; and $x,y \in \mathit{SVars}(S)$. An algebra over spanners can be defined with operators: 
\begin{enumerate}
    \item union: $S_1 \cup S_2$ having variables $\mathit{SVars}(S_1 \cup S_2) = \mathit{SVars}(S_1)$, 
    \item projection: $\pi_X(S)$ having variables $X$, 
    \item natural join: $S \join S_1$ having variables $\mathit{SVars}(S) \cup \mathit{SVars}(S_1)$, and 
    \item binary string selection: $\zeta^{=}_{x,y} S$  having variables $\mathit{SVars}(S)$. 
\end{enumerate}
The set of \DefVar{core spanners} (corresponding to the core of SystemT's AQL) includes extraction spanners specified by any extraction formula  together with all extractionspanners in the closure of core spanners under this algebra. Applying a core spanner to any document $D$ is equivalent to applying each included primitive spanner to $D$ and then applying the corresponding relational operators to the extracted relations.  The set of core spanners defines the extraction spanners subject to analysis in this paper.

\subsection{Efficient Construction of Extraction Spanners}\label{subsec:EfficientConstruction}
Given one or more spanners as input, we use various algebraic operations defined over spanners to verify properties of those spanners statically.
Specifically, we convert the inputs to \DefVar{eVset-automata}, a variant of vset-automata with the same expressivity, as proposed by Morciano~\cite{morciano2017engineering}. By preserving three properties of eVset-automata (namely, \DefVar{well-behaved}, \DefVar{pruned}, and \DefVar{operation-closed}),  Morciano is able to construct eVset-automata in polynomial time for simulating the application of projection, union, and join over spanners.
His thesis also shows that converting a regex with capture variables to an automaton and checking for emptiness can be done in polynomial time. It is trivial to show that renaming variables can also be accomplished in polynomial time with eVset-automata.

\section{Characterization of Stable Information Extraction Programs}\label{sec:properties}

In this section, we define stable information extraction programs. We show that if an information extraction program is stable, we can alter the document in such a way that the updated version of an extracted relation can be extracted directly from the altered text using the same extraction program. We propose a verifier that tests sufficient conditions on core AQL programs to determine whether the program is stable.

\begin{definition}
For the remainder of this paper, an \DefVar{extractor} is a program defined by an extraction spanner. Together with a function similar to AQL's built-in function \textit{GetText()}, an extractor can be applied to a document to produce a string relation specified by the spanner, while maintaining access to the corresponding spans in that document. We refer to the extracted string relation as the \DefVar{output view} of the document defined by the extractor. 
\end{definition}\label{def:extractor}

 In the core of AQL, extraction formulas can be combined using the algebraic operations projection, join, union, and string selection to form a tree\footnote{In fact, because intermediate spanners can be re-used, the operator ``tree'' for a spanner might be more generally a singly-rooted directed acyclic graph. Nevertheless, for pedagogic simplicity, we will refer to it as if it were a tree.} in which each subtree is an extraction spanner, leaves correspond to extraction formulas and internal nodes are algebraic operators. Thus, an extractor comprising $k$ extraction spanners is denoted $\mathcal{P}=\{\mathcal{P}_1, \cdots, \mathcal{P}_k\}$ and forms a tree, the root of which (without loss of generality, $\mathcal{P}_k$) is designated to be the \DefVar{output spanner}.  
Any potential updates are applied to an attribute $A_i$ in the output view, which corresponds to the capture variable $V_i$ in the output spanner. 

In general, if an extractor is defined by $\mathcal{P}=\{\mathcal{P}_1, \cdots, \mathcal{P}_k\}$, each capture variable in $\mathcal{P}_k$ might be defined in extraction formulas in multiple leaves of the corresponding tree and propogated to the root via applications of union and join. To identify all such formulas, for each variable $V_i \in \textit{SVars}(\mathcal{P}_k)$, we introduce the \textit{variable provenance set} denoted as $\mathit{Prov}(V_i)$:
$$\mathit{Prov}(V_i)=\{E| E \text{ is an extraction formula} \; \land V_i \in \mathit{SVars}(E) \land V_i \; \text{is not eliminated by projection} \}$$ 
For each variable $V_i$, $\mathit{Prov}(V_i)$ can be inferred by traversing the tree of extraction spanners, and every occurrence of any capture variable $V_i$ in any formula in $\mathit{Prov}(V_i)$ is called an \DefVar{updatable variable}. 

\subsection{Extracted View Update Model}\label{sec:updateModel}

In a string relation, $R$, we denote the domain of the $i^{th}$ attribute $A_i$ by $W_i$, that is, $R \subseteq W_1 \times ... \times W_{\mathcal{T}}$. 
Given a cell $r_i$ for an attribute $A_i$ in the extracted record $r$, our update model enables the replacement of $r_i$'s value with any valid value from $W_i$. Therefore, we define the update function as follows: Let $\mathcal{F}$ be an indexed set of \textit{domain-preserving} functions so that $\mathcal{F}=\{f_i | f_i: W_i \to W_i\}$ where $i \in [1\dots\mathcal{T}]$ and $W_i$ is the domain for attribute $A_i$.
If $r= \langle v_1,\dots ,v_\mathcal{T} \rangle$ is a record in $R$, we denote the corresponding record in which the $i^{th}$ attribute is modified as $F(r,i)= \langle v_1',\dots ,v_\mathcal{T}' \rangle$, where 
\begin{equation*}
v_k'=\begin{cases}

f_k(v_k) & \text{if $k=i$},\\
v_k &\text{otherwise} .\\

\end{cases}
\end{equation*}
Using our update model denoted as $U$, we can update a whole record by repeatedly updating its cells. Therefore, we extend this notation to let $F(r)= \langle f_1(v_1),\dots ,f_\mathcal{T}(v_\mathcal{T}) \rangle$. As described in Figure~\ref{fig:cleaning}, we propose to clean documents by cleaning their corresponding extracted relations and then transferring the resulting updates on those relations back to the underlying documents.

\subsection{Update Translation Mechanism} 
In order to ensure consistency between an extracted view and the source document, it is essential to have a mechanism for translating updates made to the view back to the source document, as has been done for relational databases. In this work, we propose a straightforward and intuitive translation mechanism: we replace the document value located at the span corresponding to the modified cell with the new value, Step 3 in Figure~\ref{fig:cleaning}.

Let $\mathcal{X}$ be a strict and computable extractor, $D$ be a document, $A_i$ be an extracted attribute with value $v_i$, and $[a_i,b_i \rangle$ denote the span in $D$ starting at the character at offset $a_i$ and ending before the character at offset $b_i$ from which $v_i$ of record $r$ is extracted.  We define a document synthesizer function $DSyn(D,r,i,f_i(v_i))$ that for an input document generates a document that results from substituting $f_i(v_i)$ for the occurrence of $v_i$ located at $[a_i,b_i \rangle$. 
Formally, 
$DSyn(D,r,i,f_i(v_i))= D_{[1, a_i \rangle} \bullet f_i(v_i) \bullet D_{[b_i, |D|+1 \rangle}$, the slice of the document up to (but not including) offset $a_i$ concatenated with the replacement value and the slice of the document starting at offset $b_i$.
\subsection{Stable Extractors} \label{sec:stableExtr}
\begin{definition}\label{def:stable}
 An information extraction algorithm $\mathcal{X}$ is \DefVar{stable} if it is strict, computable, and deterministic and for any document $D$, $\forall i \in [ 1 \dots \mathcal{T} ]$, and $\forall r \in R$ we have
$$\mathcal{X}(D)=R \implies \mathcal{X}(DSyn(D,r,i,f_i(v_i)))= \{r'|r' \in R \land r' \neq r \} \cup F(r,i)$$ where $R$ is the extracted string relation. 
\end{definition}
 
\noindent Thus, with a stable extractor, changing a value in the appropriate position in a document affects only the expected cell in the extracted relation, as illustrated in Figure~\ref{fig:exvu}. Our extraction process operates on a document-at-a-time basis, guaranteeing that records extracted from other documents cannot be affected when $D$ is altered to reflect the update to an individual cell. Therefore, we need only consider possible side-effects on those records extracted from the same source document as the one corresponding to an updated cell. 

\begin{theorem}
Consider a stable extractor $\mathcal{X}$, any indexed set of domain preserving functions $\mathcal{F}=\{f_i | f_i : W_i \to W_i$, where $i \in [ 1 \dots \mathcal{T} ] \}$, and any document $D$. For all $i \in [ 1 \dots \mathcal{T} ] $ and $r \in R$, substituting $f_i(v_i)$ for $v_i$ in $[a_i,b_i \rangle$ produces $D_\mathcal{F}$ in such a way that $F(\mathcal{X}(D))=\mathcal{X}(D_\mathcal{F})$.
\end{theorem}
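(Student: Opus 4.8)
The plan is to reduce the claimed statement, which involves updating an entire record via the indexed set $\mathcal{F}$, to the single-cell stability property guaranteed by Definition~\ref{def:stable}. The key observation is that the theorem's document $D_\mathcal{F}$ is obtained by substituting $f_i(v_i)$ for $v_i$ in \emph{every} span $[a_i,b_i\rangle$ of the record $r$, whereas stability as defined speaks only about changing a \emph{single} cell through $DSyn$. So the natural approach is induction on the number of attributes being updated, peeling off one cell at a time and applying stability at each step.

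First I would formalize $D_\mathcal{F}$ as the result of $\mathcal{T}$ successive applications of $DSyn$, one for each attribute $i \in [1\dots\mathcal{T}]$. That is, I would define a sequence $D^{(0)} = D$, and $D^{(j)} = DSyn(D^{(j-1)}, r^{(j-1)}, j, f_j(v_j))$, where $r^{(j-1)}$ is the image of the original record $r$ after the first $j-1$ cells have been updated. The target is $D_\mathcal{F} = D^{(\mathcal{T})}$. Then I would argue by induction on $j$ that $\mathcal{X}(D^{(j)}) = \{r' \mid r' \in R \land r' \neq r\} \cup \{r_j^\ast\}$, where $r_j^\ast$ is the record whose first $j$ attributes have been updated by $f_1,\dots,f_j$ and whose remaining attributes are unchanged. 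The base case $j=0$ is just $\mathcal{X}(D)=R$, and the inductive step applies Definition~\ref{def:stable} directly to the single cell $j$: because the extractor is stable, updating cell $j$ via $DSyn$ affects only that cell in the output view and leaves every other record and cell untouched. At $j=\mathcal{T}$ the updated record is exactly $F(r)$, giving $F(\mathcal{X}(D))=\mathcal{X}(D_\mathcal{F})$.

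The main obstacle I anticipate is the bookkeeping of spans across successive updates. When cell $j$ is updated, the substituted value $f_j(v_j)$ may differ in length from $v_j$, which shifts all offsets to the right of $[a_j,b_j\rangle$ in the subsequent document $D^{(j)}$. I must therefore verify that stability applied to $D^{(j-1)}$ genuinely re-extracts the \emph{same} record structure with correctly relocated spans, so that the next invocation of $DSyn$ targets the intended cell. This is precisely what stability buys us: since $\mathcal{X}$ is strict and computable, after each update it re-identifies the spans of the persisting record in the altered document, and the theorem statement quantifies over \emph{all} documents $D$, so the inductive hypothesis applies verbatim to $D^{(j-1)}$ as a fresh input. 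I would also note that the order in which cells are updated does not matter for the final result, since each updated value lands at the position dictated by re-extraction; a brief remark confirming order-independence (or simply fixing the canonical left-to-right order) suffices to close the argument cleanly.

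One subtlety worth flagging explicitly is that the domain-preserving hypothesis on each $f_i$ is what ensures every intermediate document $D^{(j)}$ still admits the analogous matching, so that stability is applicable at each step rather than only at the first; since $f_j(v_j)\in W_j$, the updated cell value remains a legal value for attribute $A_j$, and hence the persisting record continues to be extracted with its $j$-th component now equal to $f_j(v_j)$.
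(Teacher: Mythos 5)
Your proposal is correct in substance and is in fact more rigorous than the paper's own argument, which is essentially a property checklist: the paper's proof simply notes that strictness guarantees the values $v_i$ occur in $D$, computability guarantees the spans $[a_i,b_i\rangle$ are known, determinism guarantees reproducibility across runs, and stability guarantees that substitutions corresponding to each attribute affect only the $i$th attribute of $r$ --- it never explains how the single-cell guarantee of Definition~\ref{def:stable} is lifted to an update of \emph{all} cells. That lifting is exactly what your induction supplies: defining $D^{(j)}$ by successive applications of $DSyn$, invoking stability afresh at each intermediate document (legitimate because the definition quantifies over all documents), and using computability to re-locate the shifted spans of the persisting record before the next substitution. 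Your observation that domain preservation is what keeps every intermediate step inside the stability hypothesis is likewise a point the paper leaves implicit. What your write-up still needs in order to close the theorem exactly as stated is one more, outer, layer of the same induction: the claimed identity $F(\mathcal{X}(D)) = \mathcal{X}(D_\mathcal{F})$ applies $F$ to \emph{every} record of $R$, so $D_\mathcal{F}$ is obtained by updating every cell of every record, not just of a single $r$. Your construction ends with $\mathcal{X}(D^{(\mathcal{T})}) = \{r' \in R \mid r' \neq r\} \cup \{F(r)\}$, which is not yet $F(\mathcal{X}(D))$; iterating your cell-level induction over the records of $R$, again using the fact that stability holds for arbitrary input documents, finishes the job. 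With that addition your argument is a complete proof, and a strictly tighter one than the paper's.
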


\begin{proof} 
$\mathcal{X}$ is strict, therefore all $v_i$ in $r$ occur in $D$.
Being a computable extractor, the document span 
$[a_i,b_i \rangle$ is known for every cell $v_i$ in $r$. So the locations of all spans in $D$ that need to be modified is accessible to the procedure. $\mathcal{X}$ is deterministic, thus the set of extracted values remains consistent across multiple runs of the extractor.
Finally, $\mathcal{X}$ is stable, so any substitutions corresponding to each attribute $v_i$ affect only the $i^{th}$ attribute in $r$.
\end{proof}

\noindent Thus, if an extractor is stable, then it produces a string relation that allows us to translate updates made to the relation back to the source documents. We next describe a verifier that statically analyzes an extraction program to determine whether the specified extractor is stable.

When an extracted relation is updated, the value is propagated back to the document, and the relation is re-extracted, an unstable, deterministic extractor might produce a relation in which the updated value does not appear as expected, some other value in the relation is altered, or some other rows appear or disappear.
In the remainder of this section, we thoroughly examine potential causes for these possibilities and devise conditions that we require to be met by AQL progrms to guarantee stability. Additionally, we develop a diagnostic tool that, given a document spanner and a collection of domain preserving functions as our update model $U$, determines whether the spanner satisfies certain sufficient conditions that guarantee the corresponding extractor to be stable.

\subsubsection{Domain Consistency}
The view update function replaces the value in a particular cell with another value chosen from the domain of its associated attribute. Generally, the set of possible values for an extracted attribute, $A_i$, is implicitly defined by the extraction program. We require that, for each domain-preserving function in $\mathcal{F}$,
the domain of $f_i$ is a subset of the domain formed by the capture variable $V_i$. Specifically, in an AQL program an extraction formula is bound to $V_i$, which determines the possible values for $V_i$. However, we might have multiple instances of such an extraction formula in $\mathcal{P}$, since 1) each extraction formula might have multiple disjuncts each having a distinct extraction formula for $V_i$ 2) multiple extraction formulas in $\mathcal{P}$ might include $V_i$, i.e., $|\mathit{Prov}(V_i)| > 1$. 
We require that the regular (sub)expressions that are captured by any instance of $V_i$ must all be identical. With this restriction, the updated value is guaranteed to be matched by \emph{whichever} instance of $V_i$ matched the value prior to the update.

 \begin{definition}\label{def:domaincons}
A core AQL program is deemed \DefVar{domain-consistent} if $\forall v \in \mathit{SVars}(\mathcal{P}_k)$ the corresponding domains of $v$ are identical in all formulas in $\textit{Prov}(v)$. 
 \end{definition}
 Algorithm~\ref{alg:DomainConsistency} relies on the disjunctive form of the input expression.  We have shown elsewhere~\cite{DBLP:journals/pvldb/KassaieT23} that a functional extraction formula 
$E$ with an exposed variable 
$v$ can be written in a special disjunctive form, denoted as 
$\Delta(E,v)$. To construct $\Delta(E,v)$, all disjunctions in $E$ that have the  variable $v$ in their disjuncts are ``pulled up'' over concatenations in the corresponding extended regex tree to create separate disjuncts at the outermost level of the formula.
\begin{lemma}
    Algorithm~\ref{alg:DomainConsistency} determines whether $\mathcal{P}$ is domain consistent.
\end{lemma}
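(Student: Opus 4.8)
The plan is to prove the two implications hidden in the word ``determines'': the algorithm reports that $\mathcal{P}$ is domain consistent if and only if the condition of Definition~\ref{def:domaincons} holds. Since the algorithm proceeds variable-by-variable over $\mathit{SVars}(\mathcal{P}_k)$, and the definition is itself a conjunction over those variables, I would fix an arbitrary $v \in \mathit{SVars}(\mathcal{P}_k)$ and establish correctness of the per-variable test; the full result then follows by conjunction. For the fixed $v$, I would decompose the algorithm into three phases and verify each against the definition: (i) the tree traversal that assembles $\mathit{Prov}(v)$, (ii) the rewriting of each formula $E \in \mathit{Prov}(v)$ into its disjunctive form $\Delta(E,v)$ together with the extraction of the subexpression bound to $v$ in each top-level disjunct, and (iii) the comparison of these subexpressions' domains.

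For phase (i), I would argue by induction on the operator tree that the traversal visits exactly those extraction formulas in which $v$ occurs and survives to the root without being eliminated by projection, i.e., exactly $\mathit{Prov}(v)$ as defined. For phase (ii), I would lean on the cited property~\cite{DBLP:journals/pvldb/KassaieT23} that a functional extraction formula can be rewritten as $\Delta(E,v)$ in which every disjunction carrying $v$ has been pulled above the concatenations enclosing it. The key claim to establish is \emph{completeness of the enumeration}: each top-level disjunct of $\Delta(E,v)$ contains a single occurrence of the form $v\{g\}$, and the collection of subexpressions $g$, taken over all disjuncts and all formulas of $\mathit{Prov}(v)$, is precisely the set of ``instances of $v$'' whose domains Definition~\ref{def:domaincons} requires to coincide. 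Because the rewriting preserves spanner semantics, the domain $v$ assumes in any matching selecting a given disjunct is exactly the regular language obtained from that disjunct's $g$ by erasing its (possibly nested) capture-variable markers; comparing these erased languages is therefore the correct notion of comparing domains, and it is independent of the surrounding context of each instance.

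For phase (iii), I would reduce ``all domains of $v$ identical'' to pairwise equality of the erased languages, and by transitivity to comparing each against one fixed representative. Equality of two such regular languages is decidable with the automaton machinery of Section~\ref{subsec:EfficientConstruction}: build the corresponding automata and test emptiness of their symmetric difference, invoking the polynomial-time emptiness check cited there. Assembling the three phases, the per-variable test answers affirmatively exactly when all instances of $v$ share one domain; conjoining over $v \in \mathit{SVars}(\mathcal{P}_k)$ yields termination with the correct verdict, which proves the lemma.

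The main obstacle, I expect, is phase (ii)---specifically, proving that $\Delta(E,v)$ surfaces \emph{every} distinct domain $v$ can take while never conflating two genuinely different instances. Two points need care. First, $\Delta(E,v)$ is stated for an \emph{exposed} variable, whereas a variable of $\mathcal{P}_k$ may occur nested inside another capture variable in some formula; I would need to confirm that the pull-up applies within the enclosing scope so that $v$'s bound subexpression is still isolated into a single disjunct. Second, the interaction of the pull-up with Kleene star requires an argument that a $v$-bearing subexpression under a $*$ contributes a single well-defined domain (every iteration binds $v$ to the same language) rather than an unbounded family of distinct ones. Both points reduce to the structural and semantics-preserving guarantees of the $\Delta$ construction from prior work, which is why I would route the completeness claim through those established properties rather than re-deriving them here.
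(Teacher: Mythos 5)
Your proposal is correct and takes essentially the same approach as the paper: the paper's own proof is a single sentence stating exactly this argument---the algorithm extracts all regular expressions marked by each update variable (via projection and the disjunctive form $\Delta$) and compares them pairwise for equivalence. Your three-phase elaboration, including the care taken with nested variables and the interaction of the pull-up with Kleene star, simply makes explicit the details that the paper delegates to the cited construction of $\Delta(E,v)$ and to the automaton machinery of Section~\ref{subsec:EfficientConstruction}.
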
 
\begin{proof}
The algorithm extracts all regular expressions marked by each update variable and compares them pairwise to ensure that they are equivalent.
\end{proof}

\begin{algorithm} 
\caption{Domain Consistency Test.}
\label{alg:DomainConsistency} 
\small 
\LinesNumbered
\SetKwInput{Pre}{Precondition}
 \KwIn{Core AQL program $\mathcal{P}$}
\KwOut{Boolean}
\tcc{for all updatable variables}
\ForAll{$V_i \in \mathit{SVars}(\mathcal{P}_k)$}{

$rgxLs \gets emptyList()$ \tcc{create an empty list of regular  expression}

\ForAll{$E \in \mathit{Prov}(V_i)$}{
 
 $S' \gets \pi_{V_i} E$ \tcc{project $E$ on $V_i$}
 
\tcc{for every disjunct $p$ in $S'$}
\ForAll{$p\in \Delta(S', V_i)$}{
\tcc{retrieve expression enclosed by $V_i$}
$rgxLs.add(getEnclosedRegEx(p, V_i))$ 
}
}
\ForAll{$r\in rgxLs$}{
\ForAll{$r' \in rgxLs$}{
 \If{$r \setminus r' \neq \emptyset$}{
  
  \Return{$False$}\tcc{regular expressions are not equivalent}
}
}}
}

\Return{$True$} \tcc{$\mathcal{P}$ is domain consistent}

\end{algorithm}

\normalsize
\subsubsection{Non-Conflicting Extractor} 
We need to find whether re-running an extractor over an updated document extracts \textit{correctly modified} records. 
Unwanted side-effects can be an unexpected value appearing in the updated cell, modified values for any other cells in the extracted relation, and gaining or losing one or more records in the relation. 
Such side-effects will occur when an updated span overlaps one or more other strings that are extracted or contextual text that are used to determine other strings to extract. 
In this section, we categorize the problematic overlaps and propose a mechanism to verify  sufficient conditions for a given core AQL program to be conflict-free (Definition~\ref{def:confree}).

For a core AQL program, two levels of conflicts are defined: 1) Intra-formula conflicts: These conflicts arise independently for each extraction formula within the program; 2) Inter-formula conflicts: These conflicts occur between pairs of extraction formulas. 

\noindent \textbf{Intra-formula conflicts} \\
Intra-formula conflicts occur when an extraction formula $\spanner{E}$ has problematic overlaps with its own updatable variables.\footnote{An extraction formula without any updatable variables is free of intra-formula conflicts.} We construct four spanners to examine whether an extraction spanner represented by an extraction formula might have intra-formula conflicts. 
\paragraph{Case I}
Sometimes two distinct spans extracted as the same variable include identical subspans. This requires that within the extracted relation corresponding to $\spanner{E}$, there exist at least two distinct cells, denoted $c$ and $c'$, that are extracted as $V_i$, where $E \in \text{{Prov}}(V_i)$ and $V_i \in \mathit{SVars}(\mathcal{P}_k)$. Consider the following simple example:
\begin{example}
We define $\gamma_{abbr}$ to extract mentions of drug abbreviations from a medical text:
\small $$\gamma_{abbr}=\llbracket\Sigma^* \gamma_b A\{\gamma_u\gamma_u^*(\regtext{\smallunderscore}\gamma_u\gamma_u^*)^*(\gamma_d)^*\} \gamma_b \Sigma^* \rrbracket$$ \normalsize
\noindent where again $\gamma_{u} =  [\regtextinl{A},\regtextinl{Z}]$, $\gamma_{d}=[\regtextinl{0},\regtextinl{9}]$  and $\gamma_b = \regtextinl{\smallunderscore} \vee \regtextinl{,}$. Running $\gamma_{abbr}$ on $\regtextinl{\smallunderscore ARA\smallunderscore C\smallunderscore}$ three distinct cells with overlapping subspans will be extracted as $A$ including $[1,4\rangle$, $[1, 6\rangle$, and $[5, 6\rangle$.  According  to ISMP~\cite{ISMPListofErrorProneAbbreviations},    \regtextinl{ARA\smallunderscore C} could have been mistakenly written as \regtextinl{ARA\smallunderscore A} in some instances that needs to be corrected. However, when correcting the extracted value on $[1, 6\rangle$ and transferring the corrected term back to the source document, values corresponding to more than one extracted cell, $[1, 6\rangle$ (expected) and  $[5, 6\rangle$ (not expected), would be changed.
\end{example}

Given an extraction formula $E$ and a capture variable $v \in \mathit{SVars}(E)$, we construct $CaseI(E,v)$: 
$$CaseI(E,v)= 
 \pi_X (\rho_{v \rightarrow X} \spanner{E}) \join \spanner{\allen{X\: \Cap \: Y}} \join \pi_Y ( \rho_{v\rightarrow Y} \spanner{E})$$ 
\noindent where 
 $\allen{X\Cap Y}$ is the disjunction of the fifth through the twelfth basic relationships in Table~\ref{tab:Allen}, which corresponds to spans that are unequal but share a subspan.
\begin{lemma} \label{lm:case1}
Given an extraction formula $E$ and a capture variable $v \in \mathit{SVars}(E)$, if $CaseI(E,v)=\emptyset$ then spans extracted as $v$ are non-overlapping unless they are equal. 
\end{lemma}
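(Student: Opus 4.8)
The plan is to prove the contrapositive of the biconditional-style statement by unwinding the definition of $CaseI(E,v)$ as a composition of spanner-algebra operators. Recall that $CaseI(E,v) = \pi_X (\rho_{v \rightarrow X} \spanner{E}) \join \spanner{\allen{X \Cap Y}} \join \pi_Y (\rho_{v \rightarrow Y} \spanner{E})$. First I would observe that, for any document $D$, a tuple appears in $\pi_X(\rho_{v \rightarrow X}\spanner{E})(D)$ exactly when there is a matching of $E$ against $D$ whose span for $v$ is the value bound to $X$; symmetrically for $Y$ via the second projection. The natural join with $\spanner{\allen{X \Cap Y}}$ then retains precisely those pairs of spans $(s_X, s_Y)$ that (i) are each realizable as a $v$-span of $E$ on $D$ and (ii) stand in the Allen relationship $\allen{X \Cap Y}$, which the excerpt has already identified as the disjunction of the fifth through twelfth basic relationships, i.e.\ the relation ``$X$ and $Y$ share a subspan but are unequal.''

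The core of the argument is therefore a logical equivalence: $CaseI(E,v)(D) \neq \emptyset$ if and only if there exist two spans $s_X, s_Y$, each extracted as $v$ by some matching of $E$ on $D$, with $s_X$ and $s_Y$ overlapping (sharing a subspan) yet $s_X \neq s_Y$. I would establish this equivalence in two directions. For the forward direction, a nonempty result yields a witnessing tuple whose $X$- and $Y$-components are exactly such a pair of spans, directly contradicting the conclusion ``spans extracted as $v$ are non-overlapping unless equal.'' For the contrapositive (which is what the lemma actually asserts, since it is an ``if $CaseI = \emptyset$ then \dots'' statement), I would assume that some two distinct overlapping spans $s_1 \neq s_2$ are both extracted as $v$ on some $D$, and then construct a tuple of $CaseI(E,v)(D)$ by binding $X \mapsto s_1$ and $Y \mapsto s_2$: this tuple survives each projection because $s_1, s_2$ are genuine $v$-spans, and it survives the join with $\spanner{\allen{X \Cap Y}}$ precisely because $s_1 \neq s_2$ and they share a subspan, so $\allen{X \Cap Y}$ holds. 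Hence $CaseI(E,v)(D) \neq \emptyset$, contradicting the hypothesis.

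The main obstacle, and the step deserving the most care, is the correspondence between the informal phrase ``unequal but share a subspan'' and the formal Allen disjunction used to define $\allen{X \Cap Y}$, together with the treatment of empty spans. The footnote in the excerpt flags that overlap for empty spans is handled non-standardly (both boundary empty spans are treated as overlapping), so I would verify that the chosen disjunction of basic relationships $5$--$12$ in Table~\ref{tab:Allen} captures exactly the pairs that ``share a subspan but are unequal,'' including the degenerate cases where one or both spans are empty. A secondary subtlety is the role of renaming via $\rho_{v \rightarrow X}$ and $\rho_{v \rightarrow Y}$: I would confirm that projecting onto the single renamed attribute leaves each possible $v$-span independently available to bind $X$ and $Y$, so that the join ranges over all ordered pairs of $v$-spans of $E$ on $D$ rather than being artificially restricted to spans co-occurring within a single matching. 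Once these bookkeeping points are pinned down, the equivalence—and hence the lemma—follows directly from the semantics of projection, renaming, and natural join over span relations.
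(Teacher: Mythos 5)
Your proposal is correct and follows essentially the same route as the paper's own (much terser) proof: interpret $\spanner{\allen{X \Cap Y}}$ as the universal spanner marking unequal-but-overlapping span pairs, so that the join characterizes exactly those documents on which two distinct overlapping spans are both extracted as $v$, whence emptiness of $CaseI(E,v)$ gives the lemma. Your additional bookkeeping---checking both directions of the equivalence, the empty-span convention, and that the projections let $X$ and $Y$ range over $v$-spans from different matchings---is a more careful spelling-out of the same argument rather than a different approach.
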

\begin{proof}

$\allen{X\: \Cap \: Y}$ represents the universal spanner that  contains unequal spans marked as $X$ and $Y$ where  any span that is marked by $X$  has at least one subspan in common with spans marked by $Y$. Therefore, $CaseI(E,v)$ represents all documents that match $E$ while $\pi_{v}(\spanner{E}(D))$  include at least two distinct rows contaning spans marked as $v$ that cover an identical subspan.  

\end{proof}

\paragraph{Case II}

An extracted span, might appear in more that one record as variable $v$. Consider the following example: 
\begin{example}\label{exp:case2}
Let $\gamma_{med}$ extract mentions of medications and dose designations:
\small $$\gamma_{med}=\llbracket \Sigma^* \regtext{\smallunderscore} M\{\gamma_u\gamma_u^*(\regtext{\smallunderscore}\gamma_u\gamma_u^*)^*(\gamma_d)^*\} \gamma_{b}\gamma_{s}^*D\{\gamma_{d}\gamma_{d}^*\}\regtext{\smallunderscore}(\regtext{ml} \vee \regtext{mg} )\regtext{\smallunderscore}\Sigma^* \rrbracket$$ \normalsize
\noindent where $\gamma_{s} = (\Sigma - \regtextinl{?}-\regtextinl{.})$ and the other expressions are as defined earlier. 
Applying $\gamma_{med}$ to 
\small
\begin{center}
 \regtext{Take\smallunderscore CPZ\smallunderscore140\smallunderscore mg\smallunderscore for\smallunderscore three\smallunderscore days,\smallunderscore then\smallunderscore 40\smallunderscore mg\smallunderscore for\smallunderscore two\smallunderscore weeks.}   
\end{center}
\normalsize
will extract the span for \regtextinl{CPZ} as $M$ in two records, one for each dosage. Updating the value in one of the records will also cause the other record to be changed.\footnote{Even though this might not be viewed as erroneous in this instance, it violates the definition of ``non-conflicting.''}
\end{example}

We construct the following spanner to detect such situations: 
$$CaseII(E,v, z)=
 \pi_{X, v} (\rho_{z\rightarrow X} \spanner{E}) \join \spanner{\allen{X\: \neq \: Y}} \join \pi_{Y, v} ( \rho_{z \rightarrow Y} \spanner{E}) $$
 
\noindent where $\allen{X\neq Y}$ is the disjunction of the first 12 basic relationships in Table~\ref{tab:Allen} and $z \in \mathit{SVars}(E) \setminus \{v\}$. 

\begin{lemma}\label{lm:case2}
 Given an extraction formula $E$ and $v \in \mathit{SVars}(E)$ if $\forall z \in \mathit{SVars}(E) \setminus \{v\}$ $CaseII(E,v, z)=\emptyset$, then there cannot exist two distinct rows of the extracted relation in which the spans extracted as $v$ are identical. 
\end{lemma}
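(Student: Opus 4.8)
The plan is to prove the contrapositive: I would show that if there exist two distinct rows $r_1, r_2 \in \spanner{E}(D)$ for some document $D$ in which the spans marked by $v$ are identical, then $CaseII(E,v,z) \neq \emptyset$ for some $z \in \mathit{SVars}(E) \setminus \{v\}$. Two distinct rows of a functional spanner must differ in at least one attribute; since they agree on the column for $v$ by assumption, they must disagree on some other variable $z \in \mathit{SVars}(E) \setminus \{v\}$. This $z$ is the witness that makes the corresponding $CaseII$ spanner non-empty.

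The key steps, in order, would be as follows. First, I would fix such a pair of rows $r_1,r_2$ agreeing on $v$ and disagreeing on some $z$, so that $r_1$ assigns $z$ a span $s_1$ and $r_2$ assigns $z$ a span $s_2$ with $s_1 \neq s_2$. Second, I would interpret the renamings and projections in the definition: $\pi_{X,v}(\rho_{z \to X}\spanner{E})$ produces, from each row of $\spanner{E}(D)$, a pair recording the $z$-span (now relabelled $X$) together with the $v$-span, and similarly $\pi_{Y,v}(\rho_{z \to Y}\spanner{E})$ records the $z$-span (relabelled $Y$) with the $v$-span. Third, because the join $\join$ is a natural join and both operands retain the column $v$, the join forces the two tuples to agree on $v$; the tuples derived from $r_1$ and $r_2$ do agree on $v$ by hypothesis, so they are joinable. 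Fourth, I would verify that the joined tuple, carrying $X = s_1$ and $Y = s_2$ with $s_1 \neq s_2$, satisfies the relation $\allen{X \neq Y}$ (the disjunction of the first twelve Allen relationships, i.e., all span relationships other than equality), so the constraint spanner $\spanner{\allen{X \neq Y}}$ admits it. Hence $D \in CaseII(E,v,z)$ and the spanner is non-empty, establishing the contrapositive.

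The routine direction here is the unwinding of the relational-algebra definition, but the main obstacle is handling the role of $v$ in the join correctly. I must argue carefully that retaining $v$ in both projections is exactly what guarantees the two rows being combined come from matchings that \emph{agree} on the $v$-span while \emph{disagreeing} on $z$; without the shared $v$-column the natural join could spuriously pair a $z$-span from one value of $v$ with a $z$-span from a different value of $v$, which would not correspond to the situation the lemma describes. The subtle point is therefore that the emptiness of $CaseII(E,v,z)$ for every $z$ precisely rules out any single document in which some pair of matchings collide on $v$ yet differ elsewhere, and conversely any such collision is detected by the appropriate choice of $z$.

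One remaining detail to confirm is the empty-span convention: since the rows differ on $z$ and $\allen{X \neq Y}$ covers all twelve non-equality relationships (including those arising when one span is empty, per the convention adopted earlier in the excerpt), the witness spans $s_1 \neq s_2$ are always captured by $\spanner{\allen{X \neq Y}}$. With this confirmed, the contrapositive is complete, and therefore emptiness of $CaseII(E,v,z)$ for all $z \in \mathit{SVars}(E)\setminus\{v\}$ implies no two distinct rows can share an identical span for $v$.
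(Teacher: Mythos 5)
Your proposal is correct and matches the paper's argument in substance: the paper proves this lemma by appeal to the proof of Lemma~\ref{lm:case1}, i.e., by observing that $CaseII(E,v,z)$ is precisely the spanner whose nonemptiness witnesses a document with two matchings agreeing on $v$ but differing on $z$, which is exactly your contrapositive unwound. Your version merely spells out the details the paper leaves implicit (distinct rows agreeing on $v$ must differ on some $z$, the natural join on the shared column $v$, and the empty-span convention), so no substantive difference exists.
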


\begin{proof}
Similar to the proof of Lemma~\ref{lm:case1}.
\end{proof}

\hide{
            \begin{definition}
            If $\forall v \in SVar(E)$ where $v \in V_u(\mathcal{P}_k)$ and $E \in \mathit{Prov}(v)$ we have $\forall z \in \mathit{SVars}(E) \setminus \{v\}$ $CaseII(E,v,z)=\emptyset$ then $E$ is free of \DefVar{Case II conflicts}. 
            \end{definition}
            \BK{these are not the best terms to use: $E$ is free of conflicts of Case II }
}
\paragraph{Case III}

\noindent An extracted span corresponding to a variable might have problematic overlap with spans associated with other capture variables:

\begin{example} \label{exmpl:case3}
Let $\gamma_{med}$ extract mentions of medications, dose designations, strength, and frequency:
\small $$\gamma_{med}=\llbracket \Sigma^* \regtext{\smallunderscore} M\{\gamma_u\gamma_u^*(\regtext{\smallunderscore}\gamma_u\gamma_u^*)^*(\gamma_d)^*\} \gamma_{b}^*D\{\gamma_{d}\gamma_{d}^*\}\regtext{\smallunderscore}(\regtext{ml} \vee \regtext{mg} )\gamma_{b}^*S\{\gamma_{str}?\}\gamma_{b}F\{\gamma_{\mathit{freq}}\} \Sigma^* \rrbracket$$ \normalsize
where   $\gamma_{str}=\gamma_u\gamma_u \vee \regtextinl{diluted} \vee \regtextinl{half-strength}\vee ...$\normalsize, and \small$\gamma_{\mathit{freq}}=\gamma_u\gamma_u (\regtextinl{\smallunderscore} (\gamma_u\gamma_u \vee \regtextinl{with\smallunderscore food} \vee \regtextinl{bedtime}\vee ...)^*$\normalsize.

Applying $\gamma_{med}$ to \small
\begin{center}
\regtext{....\smallunderscore ASA\smallunderscore20\smallunderscore mg\smallunderscore HS\smallunderscore OD\smallunderscore...}
\end{center} \normalsize
\noindent will mark \regtextinl{HS} as $S$ (Strength) in one row and as part of $F$ (frequency) in a different row. ISMP recommands using \regtextinl{half-strength} or \regtextinl{bedtime} instead of \regtextinl{HS}, however any modification involving \regtextinl{HS} will either affect more than one cell or cause another row to disappear. 
\end{example}

We construct the following spanner to find out whether $v$ overlaps with other capture variables: 
 $$ Case III(E,v,z) = \pi_X(  \rho_{v \rightarrow X} \spanner{E}) \join \spanner{\allen{X\: \cap \: Y}} \join ( \rho_{z \rightarrow Y} \spanner{E}) $$
where $\allen{X\: \cap \: Y}$ is the disjunction of the fifth through thirteenth basic relationships in Table~\ref{tab:Allen} and $z \in \mathit{SVars}(E) \setminus \{v\}$. 

\begin{lemma}\label{lm:case3}
 Given an extraction formula $E$ and $v \in SVar(E)$ if $\forall z \in \mathit{SVars}(E) \setminus \{v\}$ $CaseIII(E,v, z)=\emptyset$ then there exists no row in the extracted relation in which the span extracted by $v$ ovelaps with spans extracted by other capture variables. 
\end{lemma}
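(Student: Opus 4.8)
The plan is to mirror the argument used for Lemma~\ref{lm:case1} and Lemma~\ref{lm:case2}, reading $CaseIII(E,v,z)$ as a spanner whose non-emptiness on a document witnesses an overlapping pair consisting of a $v$-span and a $z$-span, and then establishing the stated implication through its contrapositive.

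First I would fix the meaning of the middle operand: since $\allen{X \cap Y}$ is the disjunction of the fifth through thirteenth basic relationships of Table~\ref{tab:Allen}, it denotes the universal spanner whose span relation on any document $D$ consists of exactly those pairs $(X,Y)$ that share at least one offset, i.e.\ that overlap (the thirteenth relationship adds the case $X = Y$). I would then unfold the three-way natural join evaluated at a fixed $D$: the left operand $\pi_X(\rho_{v\rightarrow X}\spanner{E})(D)$ is the set of spans marked by $v$ across all matchings of $E$ (renamed to column $X$); the right operand $\rho_{z\rightarrow Y}\spanner{E}(D)$ is the set of full matching tuples of $E$ with the $z$-column renamed to $Y$; and the natural join glues $X$ (shared between left and middle) to $Y$ (shared between middle and right). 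Hence $CaseIII(E,v,z)(D)\neq\emptyset$ exactly when there exist matchings $m_1,m_2$ of $E$ on $D$ with the $v$-span of $m_1$ overlapping the $z$-span of $m_2$.

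Next I would prove the contrapositive. Assume that for some document $D$ the relation $\spanner{E}(D)$ contains a row, i.e.\ a single matching $m$, whose $v$-span overlaps its $z$-span for some $z\in\mathit{SVars}(E)\setminus\{v\}$. Instantiating the witness above with $m_1=m_2=m$: the $v$-span of $m$ appears as an $X$-value in the left operand, the same $m$ contributes its $z$-span as a $Y$-value in the right operand, and the two overlap by assumption, so the joined tuple survives and $CaseIII(E,v,z)(D)\neq\emptyset$. Contrapositively, if $CaseIII(E,v,z)=\emptyset$ for every $z\in\mathit{SVars}(E)\setminus\{v\}$, then no matching of $E$ can have its $v$-span overlap a $z$-span, which is precisely the conclusion.

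The one point I would take care to state cleanly — and the place where the argument could look wrong — is that the construction takes two independent copies of $\spanner{E}$ and therefore detects the larger class of \emph{cross-matching} overlaps in addition to the intra-row overlaps named in the statement. This over-approximation is exactly what makes emptiness a \emph{sufficient} condition: the within-row overlaps we wish to forbid form a subset of the pairs tested by the join, so their absence is already guaranteed once the whole join is empty. I expect no heavy calculation here; the work is entirely in pinning down the join semantics and this subset relationship, after which the result follows as for Lemma~\ref{lm:case1}.
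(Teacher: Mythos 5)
Your proof is correct and takes essentially the same approach as the paper's: the paper disposes of this lemma by appeal to its proof of Lemma~\ref{lm:case1}, which likewise reads the three-way join as a spanner that is non-empty exactly on documents exhibiting an overlap between a $v$-span and a $z$-span, so that emptiness precludes in particular any intra-row overlap. Your unfolding of the join semantics, the contrapositive witness with $m_1=m_2=m$, and the explicit remark that the construction over-approximates (catching cross-matching overlaps too, which is exactly why emptiness is a \emph{sufficient} condition) merely spell out what the paper leaves terse.
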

\begin{proof}
Similar to the proof of Lemma~\ref{lm:case1}.
\end{proof}

For an extraction formula to match a document, certain strings must appear either inside or outside the regions marked by capture variables. If an extracted string is replaced by a new string in the source document, we wish to know whether any of the strings that specify required contextual information is disrupted in any way. Elsewhere we have presented an algorithm that, given an input extraction formula $E$ with exposed variable $v$, constructs  a modified  version of $E$ denoted as $\mathcal{C}_v(E)$ in which all contextual expressions are also marked with capture variables~\cite{DBLP:journals/pvldb/KassaieT23}.
For example, the contextualized version of the extraction formula introduced in Example~\ref{exp:case2}, $\mathcal{C}_{M}(\gamma_{med})$, is constructed as: \small
$$\llbracket \Sigma^* \textcolor{red}{c_1\{}\regtext{\smallunderscore}\textcolor{red}{\}}M\{\gamma_u\gamma_u^*(\regtext{\smallunderscore}\gamma_u\gamma_u^*)^*(\gamma_d)^*\} \gamma_{b}^*D\{\gamma_{d}\gamma_{d}^*\}\textcolor{red}{c_2\{}\regtext{\smallunderscore}(\regtext{ml} \vee \regtext{mg} )\textcolor{red}{\}}\gamma_{b}^*S\{\gamma_{str}?\}\textcolor{red}{c_3\{}\gamma_{b}\textcolor{red}{\}}F\{\gamma_{freq}\}\Sigma^* \rrbracket$$ \normalsize
\noindent where  $c_1$, $c_2$, and $c_3$ are new variables created to mark contexts. 

Contextualization covers all subexpressions except for those of the form $X_i^{*}$,  where $L(X_i)$ is a regular language that has strings of length $1$ only (i.e, \DefVar{unigrams}); when contextualizing an extraction formula, we do not cover subexpressions representing arbitrary strings in the alphabet $\hat{\Sigma} \subseteq \Sigma$.   
For example, in $\gamma_{med}$, $\gamma_b^*$ and $\Sigma^*$ are of this form and therefore remain uncovered in $\mathcal{C}_{M}(\gamma_{med})$.
 
\paragraph{Case IV}
We construct the following spanner to investigate whether the updates supported by the extraction formula $E$ may modify regions of a document that are designated as context by $E$:
$$CaseIV(E,v,z)=
 \pi_{X} ( \rho_{v \rightarrow X}\ \spanner{E}) \join \spanner{\allen{X\: \cap \: Y}} \join \rho_{z \rightarrow Y} ( \mathcal{C}_{v}({E})).$$ 
 where $\allen{X\: \cap \: Y}$ is the disjunction of the fifth through thirteenth basic relationships in Table~\ref{tab:Allen} and $z\in \mathit{SVars}(\mathcal{C}_{v}(E)) \setminus \{v\}$.
\begin{lemma}\label{lm:case4}
 Given an extraction formula $E$ and $v \in \mathit{SVars}(E)$ if $\forall z \in \mathit{SVars}(\mathcal{C}_{v}(E)) \setminus \{v\}$, $CaseIV(E,v,z)=\emptyset$, then there is no overlap between the spans extracted as $v$ and spans identified as contexts in $E$. 
\end{lemma}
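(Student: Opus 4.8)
The plan is to follow exactly the template established for Lemma~\ref{lm:case1} and reused for Lemmas~\ref{lm:case2} and~\ref{lm:case3}: unfold the semantics of the constructed spanner so that its emptiness coincides with the absence of the offending overlap. The only genuinely new ingredient relative to Cases I--III is the contextualized formula $\mathcal{C}_v(E)$, so I would first recall its two defining guarantees from the cited construction: (i) $\mathcal{C}_v(E)$ has the same document language and the same matches as $E$, so that every accepting run of $\mathcal{C}_v(E)$ on a document $D$ restricts to an accepting run of $E$ on $D$ and conversely; and (ii) for every new variable $z$ that marks a context, the span assigned to $z$ in such a run is exactly one of the contextual regions of the corresponding match of $E$ (the regions that must appear but are not themselves extracted). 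Together these say that the spans marked by the variables of $\mathcal{C}_v(E)$ other than $v$ range over precisely the context spans, together with the spans of the remaining original variables, of matches of $E$.

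Next I would make the semantics of $CaseIV(E,v,z)$ explicit on an arbitrary document $D$. The subspanner $\pi_X(\rho_{v\rightarrow X}\spanner{E})$ produces exactly the set of spans marked by $v$ over all matches of $E$ against $D$, relabelled as $X$; the subspanner $\rho_{z\rightarrow Y}(\mathcal{C}_v(E))$ produces all matches of the contextualized formula with the context span $z$ relabelled as $Y$; and $\spanner{\allen{X \cap Y}}$ is the universal spanner whose relation on $D$ is the set of all pairs of spans $(s_X,s_Y)$ of $D$ that overlap (relationships 5--13, hence including equality). Because the projection leaves only $X$ in the first operand and the overlap spanner carries only $X$ and $Y$, the sole variable shared at the second join is $Y$; consequently the natural join imposes no constraint forcing the two underlying matches to coincide. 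Hence a tuple survives the join precisely when there is a match of $E$ contributing a $v$-span $s_X$ and a (possibly distinct) match of $\mathcal{C}_v(E)$ contributing a context span $s_Y$ with $s_X$ overlapping $s_Y$.

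From this the statement is immediate: $CaseIV(E,v,z)=\emptyset$ --- meaning the spanner yields the empty relation on every document, a property checkable statically by the emptiness test of Section~\ref{subsec:EfficientConstruction} --- holds iff no span extracted as $v$ ever overlaps a span marked by $z$ in $\mathcal{C}_v(E)$. Quantifying over all $z \in \mathit{SVars}(\mathcal{C}_v(E)) \setminus \{v\}$, which enumerates exactly the context variables together with the remaining original variables, the hypothesis $\forall z,\ CaseIV(E,v,z)=\emptyset$ is equivalent to the absence of any overlap between a $v$-span and a context span, which is the conclusion.

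The step I expect to require the most care is the independence-of-matches argument in the middle paragraph: I must verify that, after the projection and the two renamings, the only join variable is $Y$, so that the construction captures cross-match overlaps (a $v$-span from one accepting run overlapping a context from a different run on the same document) and not merely within-match overlaps. This is exactly the behaviour we want, since re-running the extractor after an update re-evaluates every match, so disruption of any match's context is a genuine conflict; but it is easy to state the join carelessly and inadvertently fuse the two runs. I would also double-check the empty-span convention: since $\allen{X \cap Y}$ here includes the equality case (relationship 13), an exact coincidence of a $v$-span with a context span is correctly flagged, consistent with the symmetric treatment of empty-span overlap adopted in Section~\ref{sec:verU2V}.
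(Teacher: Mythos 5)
Your proposal is correct and follows essentially the same route as the paper, which proves this lemma by the same semantic unfolding used for Lemma~\ref{lm:case1}: the Allen-relation spanner is the universal spanner encoding overlap, so the join is nonempty on some document exactly when a $v$-span overlaps a context span, and emptiness rules this out. Your additional care about the join variables (only $Y$ shared with the third operand, so cross-match overlaps are captured) and the empty-span/equality conventions makes explicit what the paper leaves implicit, but it is the same argument.
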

\begin{proof}
Similar to the proof of Lemma~\ref{lm:case1}.
\end{proof}
\hide{
            \begin{definition}
            If $\forall v \in SVar(E)$ where $v \in V_u(\mathcal{P}_k)$, $E \in \mathit{Prov}(v)$, and $E$ is free of conflict $Case III$ if $CaseIV(E,v,q)=\emptyset$ $\forall q \in \mathit{SVars}(\mathcal{C}_{v}(E)) \setminus \{v\}$, then $E$ is free of \DefVar{Case IV conflicts}. 
            \end{definition}
            }

\begin{theorem}
Let $\mathcal{P}$ define an extractor with output spanner $\mathcal{P}_k$. If $\forall v \in SVar(E) \cap \mathit{SVars}(\mathcal{P}_k)$ where $E \in \mathit{Prov}(v)$ we have 
\begin{enumerate}
\item $CaseI(E,v)=\emptyset$,
\item $\forall z \in \mathit{SVars}(E) \setminus \{v\}$ $CaseII(E,v,z)=\emptyset$,
\item $\forall z \in \mathit{SVars}(E) \setminus \{v\}$ $CaseIII(E,v,z)=\emptyset$, and
\item $\forall z \in \mathit{SVars}(\mathcal{C}_{v}(E)) \setminus \{v\}$ $CaseIV(E,v,z)=\emptyset$
\end{enumerate}
then $\mathcal{P}$ is \DefVar{free of intra-formula conflicts}. 
\end{theorem}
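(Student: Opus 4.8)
The plan is to show that the four case conditions, taken together, rule out exactly the four kinds of intra-formula side-effects that can arise when an updatable span is replaced in the source document, so that re-extraction reproduces the intended single-cell change. First I would fix an arbitrary extraction formula $E$ appearing in the tree of $\mathcal{P}$ and an arbitrary updatable variable $v \in \mathit{SVars}(E) \cap \mathit{SVars}(\mathcal{P}_k)$ with $E \in \mathit{Prov}(v)$, since by definition being free of intra-formula conflicts is a statement quantified over all such pairs $(E,v)$. The goal is then to argue that, conditioned on the four emptiness hypotheses, a substitution at the span extracted as $v$ in any document $D$ matching $E$ cannot: (i) coincidentally alter a second cell extracted as the same variable $v$, (ii) alter a cell extracted as $v$ in a different record, (iii) corrupt a span extracted as some other capture variable, or (iv) disrupt a region serving as required context. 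Each of these four failure modes is precisely the scenario whose nonexistence is asserted by Lemmas~\ref{lm:case1} through~\ref{lm:case4}.

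The key steps, in order, are to invoke each lemma in turn and translate its span-relation conclusion into a statement about the translation mechanism $DSyn$. From $CaseI(E,v)=\emptyset$ and Lemma~\ref{lm:case1}, any two spans extracted as $v$ are either equal or disjoint (share no subspan); hence replacing the substring at the chosen $v$-span touches no other distinct cell extracted as $v$ within the same record. From $\forall z\, CaseII(E,v,z)=\emptyset$ and Lemma~\ref{lm:case2}, no two distinct records agree on their $v$-span, so the update is localized to a single record's cell. From $\forall z\, CaseIII(E,v,z)=\emptyset$ and Lemma~\ref{lm:case3}, the $v$-span overlaps no span extracted as any other variable $z$, so no sibling attribute's value is silently rewritten. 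Finally, from $\forall z\, CaseIV(E,v,z)=\emptyset$ and Lemma~\ref{lm:case4}, using the contextualized formula $\mathcal{C}_v(E)$, the $v$-span overlaps none of the contextual regions that $E$ requires to match, so the surrounding context that enabled the match is preserved after substitution. Conjoining these four guarantees over all $(E,v)$ gives exactly the assertion that $\mathcal{P}$ is free of intra-formula conflicts.

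I expect the main obstacle to be the bookkeeping that connects the \emph{static} emptiness of the constructed spanners to the \emph{dynamic} guarantee about what $DSyn$ does to a concrete document, rather than any single deep argument. In particular, one must be careful that the four cases are genuinely exhaustive of the intra-formula side-effects: Case~I and Case~II together handle interference \emph{within} the same variable (same record versus across records), Case~III handles interference with other \emph{extracted} variables, and Case~IV handles interference with \emph{contextual} material, where the subtlety is that contextualization deliberately leaves certain unigram-starred subexpressions (such as $\gamma_b^*$ and $\Sigma^*$) uncovered. I would argue that leaving those uncovered is harmless precisely because overlap with an arbitrary-fill region cannot break matchability, and that every other subexpression contributing a required context is marked in $\mathcal{C}_v(E)$ and thus examined by Case~IV. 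Since the theorem's conclusion is, by the definitions given, merely the conjunction of the four per-variable conditions already certified by the lemmas, the proof is essentially a matter of asserting exhaustiveness and then citing Lemmas~\ref{lm:case1}--\ref{lm:case4}; no fresh calculation is required beyond verifying that the quantifier ranges in the hypotheses match those in the lemma statements.
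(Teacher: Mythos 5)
Your proposal is correct and follows essentially the same route as the paper: the paper's proof simply states that the theorem summarizes Lemmas~\ref{lm:case1}--\ref{lm:case4} and that no other intra-formula conflicts are possible, which is precisely your strategy of invoking the four lemmas and asserting their exhaustiveness over all pairs $(E,v)$. Your additional remarks on matching quantifier ranges and on why uncovered unigram-starred subexpressions are harmless elaborate, but do not depart from, the paper's argument.
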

\begin{proof}
    This summarizes Lemmas~\ref{lm:case1}-\ref{lm:case4}, and no other intra-formula conflicts are possible.
\end{proof}

\noindent \textbf{Inter-formula conflicts} \\An inter-formula conflict occurs if a span extracted by a formula $E \in \mathcal{P}$ is updated and shares a subspan that is matched by another formula $E' \in \mathcal{P}$. Given a pair of extraction formulas we construct the following spanner: 
$$InterOverlap({E},{E}',v)=  \bigcup\limits_{v'' \in \mathit{SVars}(S)}^{}  \pi_X (\rho_{v \rightarrow X} \spanner{E}) \join \spanner{\allen{X\: \cap \: Y}} \join \pi_Y ( \rho_{v'' \rightarrow Y} \spanner{S})
$$ where $E \neq E'$, $v \in SVar(E)$, $v' \in SVar(E') $ and $v'$ is an exposed variable, and $S=\mathcal{C}_{v'}(E')$.

\begin{lemma} \label{lm:interformula}
Given a pair of distinct extraction formulas $E$ and $E'$ if $\forall v \in \mathit{SVars}(E)$, $InterOverlap(E,$ $E',v)=\emptyset$ then the spans extracted by $E$ do not have any problematic overlap with spans consumed by $E'$. 
\end{lemma}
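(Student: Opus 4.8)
The plan is to mirror the argument used for Lemmas~\ref{lm:case1}--\ref{lm:case4}, reading the emptiness of a constructed join spanner as the nonexistence of a witnessing document. First I would fix the meaning of the middle factor: by the extension of Allen's relations in Table~\ref{tab:Allen}, $\allen{X \cap Y}$ is the disjunction of the fifth through thirteenth basic relationships, so $\spanner{\allen{X \cap Y}}$ is the universal spanner that accepts every document and, on any document $D$, relates a pair of spans $(s_X, s_Y)$ exactly when they share at least one common subspan (with empty spans treated as overlapping per the paper's convention).

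Next I would unfold the join semantics. Since applying a core spanner to $D$ is equivalent to applying each primitive spanner to $D$ and taking the relational natural join of the resulting span relations (Section~\ref{sec:Extractors}), a tuple lies in $\pi_X(\rho_{v \to X}\spanner{E}) \join \spanner{\allen{X \cap Y}} \join \pi_Y(\rho_{v'' \to Y}\spanner{S})$ applied to $D$ if and only if some matching of $E$ on $D$ marks a span $s_X$ as $v$, some matching of $S=\mathcal{C}_{v'}(E')$ on $D$ marks a span $s_Y$ as $v''$, and $s_X$ overlaps $s_Y$. Taking the union over $v'' \in \mathit{SVars}(S)$ therefore makes $InterOverlap(E,E',v)$ nonempty precisely when some document carries an $E$-extracted $v$-span that overlaps \emph{some} span marked by the contextualized $E'$. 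Reading this contrapositively across the outer quantifier, if $InterOverlap(E,E',v)=\emptyset$ for every $v \in \mathit{SVars}(E)$, then on no document does any span extracted by $E$ overlap any span marked in $\mathcal{C}_{v'}(E')$.

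Finally I would connect ``spans marked in $\mathcal{C}_{v'}(E')$'' to ``spans consumed by $E'$''. Here I would invoke the contextualization guarantee established in prior work~\cite{DBLP:journals/pvldb/KassaieT23}: $\mathcal{C}_{v'}(E')$ augments $E'$ with fresh capture variables covering every subexpression whose content is required for $E'$ to match, leaving uncovered only the arbitrary-string regions of the form $X_i^*$ with $L(X_i)$ a set of unigrams over $\hat\Sigma \subseteq \Sigma$. Consequently $\mathit{SVars}(S)$ marks exactly the extracted and contextual regions of $E'$ whose disturbance could change whether $E'$ matches, and the absence of overlap with these regions is precisely the absence of \emph{problematic} overlap, which is what the lemma asserts.

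The main obstacle is this last step: justifying that restricting the overlap test to $\mathit{SVars}(S)$, rather than to the entire matched extent of $E'$, loses nothing. This rests on showing that overlap with an uncovered $X_i^*$ region is harmless---since such a region accepts any string over $\hat\Sigma$, the update's replacement value continues to be matched there---which is exactly the property the contextualization construction is designed to provide. A secondary, routine check is that the union over $v''$ together with the outer quantifier over $v$ exhausts all (extracted span, consumed span) pairs, and that empty spans are handled consistently with the paper's overlap convention.
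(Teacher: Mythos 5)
Your proposal is correct and follows essentially the same route as the paper: the paper's proof of this lemma simply defers to the argument of Lemma~\ref{lm:case1}, namely reading the emptiness of the join with the universal overlap spanner $\spanner{\allen{X \cap Y}}$ as the nonexistence of any document witnessing an overlap between a $v$-span of $E$ and a span marked in $\mathcal{C}_{v'}(E')$, which is exactly your first two steps. Your third step---arguing that overlap with the uncovered unigram regions $X_i^*$ is harmless---is a sound elaboration the paper omits here, though note the paper handles that concern not through the contextualization construction itself but through its separate character-respect condition (Definition~\ref{def:respect} and Algorithm~\ref{alg:respectChars}).
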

\begin{proof}
Similar to the proof of Lemma~\ref{lm:case1}.
\end{proof}

\begin{theorem}
 Given any extraction formula $E \in \mathcal{P}$, if $\forall v \in \mathit{SVars}(E) \cap \mathit{SVars}(\mathcal{P}_k)$ where $E \in \mathit{Prov}(v)$ the spans extracted as $v$ do not overlap with spans consumed by any other extraction formulas in $\mathcal{P}$ then $\mathcal{P}$ is \DefVar{free of inter-formula conflicts}.
\end{theorem}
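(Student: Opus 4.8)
The plan is to reduce this final theorem to the inter-formula overlap machinery already developed, namely Lemma~\ref{lm:interformula}. The theorem asserts that if, for every extraction formula $E \in \mathcal{P}$ and every updatable variable $v \in \mathit{SVars}(E) \cap \mathit{SVars}(\mathcal{P}_k)$ with $E \in \mathit{Prov}(v)$, the spans extracted as $v$ do not overlap with spans consumed by any other formula in $\mathcal{P}$, then $\mathcal{P}$ is free of inter-formula conflicts. Since ``inter-formula conflict'' was defined as a span extracted by a formula $E$ being updated while it shares a subspan matched by another formula $E'$, the entire content of the theorem is to show that the family of emptiness tests $InterOverlap(E,E',v)=\emptyset$, quantified over all relevant $E$, $E'$, and $v$, exactly captures the absence of such conflicts. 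First I would fix an arbitrary ordered pair of distinct formulas $E,E' \in \mathcal{P}$ and an arbitrary updatable variable $v \in \mathit{SVars}(E)$, and invoke Lemma~\ref{lm:interformula}: its emptiness conclusion guarantees that spans extracted as $v$ by $E$ have no problematic overlap with spans consumed by the contextualized formula $S=\mathcal{C}_{v'}(E')$.

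The key step is then to argue that ``spans consumed by $E'$'' is faithfully represented by the union over all variables $v''$ of the contextualized spanner $\mathcal{C}_{v'}(E')$. This is where the contextualization construction does its work: recall from the discussion preceding Case~IV that $\mathcal{C}_{v'}(E')$ marks not only the originally-captured regions of $E'$ but also all its \emph{contextual} subexpressions with fresh capture variables, so that the union $\bigcup_{v''\in \mathit{SVars}(S)} \pi_Y(\rho_{v''\to Y}\spanner{S})$ ranges over \emph{every} region of the document that $E'$ relies upon—both extracted values and the surrounding text that must be present for a match. I would emphasize that the only subexpressions left uncovered by contextualization are the unigram-star forms $X_i^{*}$ over some $\hat\Sigma\subseteq\Sigma$, which represent arbitrary filler and therefore impose no constraint whose disruption could alter the match. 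Hence, if an update to a $v$-span in $E$ fell inside any position that matters to $E'$, it would overlap one of these marked regions and $InterOverlap(E,E',v)$ would be nonempty.

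With these pieces assembled, the proof proceeds by contrapositive and a union-over-pairs argument: suppose $\mathcal{P}$ has an inter-formula conflict. By definition there exist formulas $E\neq E'$ and a span $s$ extracted as some updatable $v$ in $E$ that shares a subspan with a region matched by $E'$. That shared region is either an extracted or contextual region of $E'$, hence marked by some $v''$ in $S=\mathcal{C}_{v'}(E')$ and overlapping relationship is one of the fifth through thirteenth Allen relationships captured by $\spanner{\allen{X \cap Y}}$; therefore the corresponding disjunct of $InterOverlap(E,E',v)$ is nonempty, so the hypothesis of the theorem fails. Conversely, if the hypothesis holds—every such $InterOverlap$ is empty—no such shared subspan can exist for any ordered pair, so $\mathcal{P}$ is free of inter-formula conflicts.

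I expect the main obstacle to be the faithfulness claim for contextualization: arguing rigorously that every document region whose alteration could disrupt a match of $E'$ is in fact marked by some variable of $\mathcal{C}_{v'}(E')$, and that the deliberately uncovered unigram-star subexpressions genuinely cannot be sources of conflict. This hinges on properties of the $\mathcal{C}_{v'}(\cdot)$ construction established in the cited prior work~\cite{DBLP:journals/pvldb/KassaieT23}, so I would state that dependence explicitly rather than re-deriving it, and keep the present proof at the level of assembling Lemma~\ref{lm:interformula} across the quantifiers, as the authors appear to intend.
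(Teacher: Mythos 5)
Your proposal is correct and follows essentially the same route as the paper: the paper's entire proof is the one-line observation that the result follows from Lemma~\ref{lm:interformula}, this being the only possible form of inter-formula conflict. Your write-up simply unpacks that same reduction---quantifying over ordered pairs of formulas and updatable variables, and spelling out why the contextualization construction makes the $InterOverlap$ emptiness tests faithful to the semantic hypothesis---detail that the paper leaves implicit (and, as you note, delegates to the cited prior work on $\mathcal{C}_{v'}(\cdot)$).
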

\begin{proof}
    This results from Lemma~\ref{lm:interformula}, which is the only possible form of inter-formula conflict.
\end{proof}

\begin{definition}\label{def:confree}
An AQL program $\mathcal{P}$ is \DefVar{conflict-free} if its extraction formulas is free of both intra- and inter-formula conflicts. 
\end{definition}

 \begin{algorithm} 
\caption{Respecting Characters Test.}
\label{alg:respectChars} 
\LinesNumbered
\SetKwInput{Pre}{Precondition}
 \KwIn{Conflict-free core AQL program $\mathcal{P}$}
\KwOut{Boolean}
\Pre{All formulas has an exposed variable }
 $\mathcal{C} \gets \emptyset$
 
 \tcc{for all extraction formulas in $\mathcal{P}$}
\ForAll{$E \in \mathcal{P}$} {\label{line:everyE}

 $v \gets SVars(E).getExposed()$ \tcc{pick an exposed arbitrary variable}

 $\mathcal{U} \gets unigrams(\mathcal{C}_v(E))$   \tcc{contextualize $E$ by $v$; get all uncovered regions}
 
\ForAll{$u \in \mathcal{U}$}{

 $\mathcal{C} \gets \mathcal{C}\cup (\Sigma \setminus getChars(u))$ \tcc{$getChars(u)$ returns character set of $u$}\label{line:restrictedchars}
}
}

\ForAll{$V_i \in \mathit{SVars}(\mathcal{P}_k)$}{
 $rgxLs \gets emptyList()$ 
 
\ForAll{$E \in \mathit{Prov}(V_i)$}{

 $rgxLs.add(getEnclosedRegExs(E, V_i))$ \label{line: getexpression}\tcc{get regular expressions enclosed by $V_i$ in $E$}
}
\ForAll{$r\in rgxLs$} {
 $ \mathcal{G} \gets getChars(r)$ \label{line:getgoodcharacter}
 
\If{$\mathcal{G} \cap \mathcal{C} \neq \emptyset$}{
  \Return{$False$}
}
}
}
\Return{$True$} \tcc{Update respects characters}
\end{algorithm}
\subsubsection{Character conflicts}
With a conflict-free program, we guarantee that update translation does not affect the marked regions of any extraction formulas.  Essentially, we allow the unmarked regions to be freely modified, with the exception that no restricted characters can be deleted or inserted in those regions. Given an extraction formula, for each instance of an expression of the form $\hat\Sigma^*$ that is not covered by the contextualization algorithm, we define $\Sigma \setminus \hat\Sigma$ as the \textit{set of restricted characters}. 
For an  AQL program, we denote the set of  all restricted characters as 
$\mathcal{C}$ which is  the union of all  restricted character sets 
corresponding to  all  $E_i \in \mathcal{P}$. 

\begin{definition}\label{def:respect}
Given  a conflict-free  program $\mathcal{P}$ and  the update model $U$,  \DefVar{$U$ respects $\mathcal{P}$'s characters} 
if  $\forall v \in \mathit{SVars}(\mathcal{P}_k)$ and  $\forall E \in \mathit{Prov}(v)$ we have:  $\sigma  \in v \Rightarrow \sigma \notin \mathcal{C}$.
\end{definition}
\noindent Thus, the update model respects $\mathcal{P}$'s characters  if it  neither deletes nor inserts a symbol in $\mathcal{C}$. 
 \begin{lemma}
Given a conflict-free program, Algorithm~\ref{alg:respectChars} determines if the update model respects programs' characters.
 \end{lemma}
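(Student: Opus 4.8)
The plan is to show that Algorithm~\ref{alg:respectChars} returns \emph{True} exactly when $U$ respects $\mathcal{P}$'s characters in the sense of Definition~\ref{def:respect}. That definition has two ingredients: the set of restricted characters $\mathcal{C}$, and the membership test $\sigma \in v \Rightarrow \sigma \notin \mathcal{C}$ ranging over all $v \in \mathit{SVars}(\mathcal{P}_k)$ and all $E \in \mathit{Prov}(v)$. I would therefore prove correctness by arguing that the two phases of the algorithm compute precisely these two ingredients. The first loop (lines~\ref{line:everyE}--\ref{line:restrictedchars}) accumulates $\mathcal{C}$; the remaining loops collect, for each output variable, the characters that can occur in it and test them against $\mathcal{C}$. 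If both phases are faithful, the algorithm returns \emph{False} iff some character of some updatable variable lies in $\mathcal{C}$, which is exactly the negation of Definition~\ref{def:respect}, and the nested-loop structure that returns \emph{False} on the first witness and \emph{True} otherwise correctly realises the universal quantification.

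First I would verify the construction of $\mathcal{C}$. By the definition preceding Definition~\ref{def:respect}, the restricted-character set for a single formula $E$ is the union, over every uncovered subexpression of the form $\hat\Sigma^*$, of $\Sigma \setminus \hat\Sigma$, and $\mathcal{C}$ is the union of these sets over all $E \in \mathcal{P}$. The algorithm realises this by contextualizing each $E$ via an exposed variable and calling $unigrams(\mathcal{C}_v(E))$ to retrieve the uncovered regions $\mathcal{U}$; for each $u \in \mathcal{U}$ we have $getChars(u)=\hat\Sigma$, so $\Sigma \setminus getChars(u)=\Sigma \setminus \hat\Sigma$ is the set added to $\mathcal{C}$. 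Here I would lean on the stated property of the contextualization construction that it covers every subexpression except the unigram-stars; because the set of unigram-star subexpressions is a purely structural feature of $E$, the uncovered regions---and hence the contribution to $\mathcal{C}$---are independent of \emph{which} exposed variable the algorithm happens to pick, which justifies the arbitrary choice made when $v$ is selected.

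Next I would treat the membership test. The notation $\sigma \in v$ denotes that $\sigma$ can appear in a string matched by $v$; under a provenance formula $E$, the set of such characters is obtained from the regular (sub)expressions enclosed by $v$ in $E$. The second phase collects all of these via $getEnclosedRegExs(E, V_i)$ over every $E \in \mathit{Prov}(V_i)$, and for each expression $r$ computes $\mathcal{G}=getChars(r)$, the alphabet of $L(r)$. Returning \emph{False} as soon as some $\mathcal{G} \cap \mathcal{C} \neq \emptyset$ is therefore equivalent to detecting a $v \in \mathit{SVars}(\mathcal{P}_k)$, an $E \in \mathit{Prov}(v)$, and a $\sigma \in v$ with $\sigma \in \mathcal{C}$; conversely, reaching the final line means no such witness exists, i.e.\ $\sigma \in v \Rightarrow \sigma \notin \mathcal{C}$ holds for all $v$ and all $E$. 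Combining the two phases yields the claimed equivalence.

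I expect the main obstacle to be the two faithfulness claims underlying this correspondence rather than the loop bookkeeping. Specifically, I must argue that $getChars$ returns the true matched alphabet of an expression---so that, for example, a range $[\sigma,\sigma']$ contributes every character between its endpoints, matching the semantic reading of $\sigma \in v$---and that $unigrams(\mathcal{C}_v(E))$ recovers all and only the $\hat\Sigma^*$ subexpressions left uncovered by contextualization, uniformly in the choice of exposed variable. Both of these rest on the contextualization algorithm and the properties of $\mathcal{C}_v(E)$ established in the cited prior work~\cite{DBLP:journals/pvldb/KassaieT23}; once they are granted, the equivalence between the algorithm's return value and Definition~\ref{def:respect} follows directly.
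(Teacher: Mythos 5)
Your proposal is correct and follows essentially the same route as the paper's own proof: the first loop faithfully accumulates the restricted-character set $\mathcal{C}$ from the uncovered unigram regions, the second loop gathers the character sets of the expressions enclosed by each update variable, and the emptiness of their intersection is exactly the condition of Definition~\ref{def:respect}. Your treatment is somewhat more careful than the paper's—in particular, the observation that the uncovered unigram-star subexpressions are a structural feature of $E$ and hence independent of which exposed variable is chosen for contextualization is a point the paper's proof leaves implicit.
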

\begin{proof}
For each extraction formula in the input program (Line~\ref{line:everyE}), the set of restricted characters from its unigrams is retrieved and added to $\mathcal{C}$ (line~\ref{line:restrictedchars}). In our proposed update model, the domain and range of each view update function are determined by the domains of the associated attributes in the AQL program (see Section~\ref{sec:updateModel}). Therefore, the regular expressions associated with the update variables are retrieved in Line~\ref{line: getexpression}, and their corresponding character sets are extracted in Line~\ref{line:getgoodcharacter} and added to $\mathcal{G}$. If there is a common character between $\mathcal{G}$ and $\mathcal{C}$, the update may insert or remove a restricted character in a region marked as a unigram, which could adversely affect its matching. 

\end{proof}

\subsubsection{Join}

\hide{
            \begin{definition}
            A join operator is \DefVar{one-to-one} if 1) the set of join keys is non-empty, and 2) in each operand, all attributes are functionally dependent on the join keys.
            \end{definition}
}
\begin{definition}\label{def:nonexpjoin}
An arbitrary core AQL program is called \DefVar{non-expanding} if all join operators in the tree of extraction spanners are one-to-one.  
\end{definition}
\noindent If a join operator is not one-to-one, several tuples from one operand match the join attributes of some tuple from the other operand. Thus, if all attributes in both operand relations are functionally dependent on the join attributes, the join operator must be one-to-one.  
\begin{lemma}
Given a (non-empty) set of functional dependencies, if all join operators in an AQL program are necessarily one-to-one, regardless of input documents, (i.e., if all attributes in each pair of relations to be joined are functionally dependent on the join attributes) then for each operand relation there exists a partial identity function mapping its cells to cells in the resulting relation. 
\end{lemma}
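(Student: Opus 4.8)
The plan is to exhibit the partial identity explicitly at a single join node; the statement is essentially about one join operator, so fix a join $S \join S_1$ and let $K = \mathit{SVars}(S) \cap \mathit{SVars}(S_1)$ be its join attributes. First I would turn the hypothesis into a key statement: by assumption every attribute of $S$ is functionally dependent on $K$, so $K$ is a superkey of $S$, and symmetrically $K$ is a superkey of $S_1$. Because these dependencies are required to hold regardless of the input document, each value of $K$ occurs in at most one record of $S$ and in at most one record of $S_1$ for every span relation the operands can produce.

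Next I would read the one-to-one structure off the key property. By the semantics of natural join, every record $t$ of $S \join S_1$ is the merge of a record $r \in S$ and a record $r_1 \in S_1$ that agree on $K$. Since $K$ is a key of $S_1$, the record $r$ can agree on $K$ with at most one record of $S_1$, so at most one result record restricts to $r$ on $\mathit{SVars}(S)$. This lets me define a record map $\psi$ by setting $\psi(r) = t$ when the matching $S_1$-record exists and leaving $\psi(r)$ undefined otherwise. Distinct records of $S$ carry distinct $K$-values (as $K$ is a key of $S$) and hence have distinct images, so $\psi$ is a well-defined injective \emph{partial} function, the partiality capturing exactly those records of $S$ that have no join partner and so drop out of the result.

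I would then lift $\psi$ to cells. For each attribute $A_i \in \mathit{SVars}(S)$, set $\phi(r, A_i) = (\psi(r), A_i)$, with the same domain of definition as $\psi$. Since natural join copies the $S$-columns of $r$ verbatim into its merged record, the value stored in the cell $(\psi(r), A_i)$ is precisely the value stored in $(r, A_i)$; thus $\phi$ is a partial identity on cell values, as claimed. The identical argument with the roles of $S$ and $S_1$ exchanged produces the map for the second operand, and the per-node maps compose along the tree of extraction spanners whenever a join feeds into a further operator.

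The bulk of this is routine natural-join bookkeeping, and I expect the one step deserving real care to be the passage from the stated functional dependencies to the one-to-one/key property: I must argue that functional dependence of \emph{all} attributes on $K$ (not merely of the join keys on each other) is what forces each operand record to contribute to at most one result record, and I must handle the partiality cleanly so that records with no match are simply outside the domain of $\phi$ rather than mapped incorrectly. Establishing that these dependencies hold for every admissible span relation, and not just for one instance, is the subtlety that makes the \emph{partial identity} well defined across all inputs.
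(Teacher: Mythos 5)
Your proof is correct, and it rests on the same core fact as the paper's---the functional dependencies make the join attributes a superkey of each operand, so an operand record can pair with at most one record of the other operand---but the two arguments are organized differently. The paper proceeds by contradiction: it supposes a cell of some record $r_x \in X$ maps to two distinct cells of $Z = X \join Y$, concludes that $r_x$ must pair with more than one row of $Y$, and rules out the two ways this could happen (two rows of $Y$ agreeing on the join keys, contradicting the FDs; or the join degenerating to a Cartesian product, which the FDs permit only for singleton operands). You instead construct the mapping outright: superkey $\Rightarrow$ at most one join partner $\Rightarrow$ a record-level injective partial map $\psi$, then the cell-level lift $\phi$, with value preservation read directly off natural-join semantics. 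Your route buys explicitness: the partial function, its domain (records lacking a join partner are excluded), injectivity, and the identity on values are all exhibited rather than left implicit, and the paper's Cartesian-product subcase is absorbed automatically, since with an empty join-key set your superkey argument forces each operand to contain at most one record. The paper's route is shorter and makes visible exactly which failure mode each hypothesis excludes. Your closing concern about the FDs holding across all admissible instances is discharged by hypothesis rather than by proof---the lemma assumes the joins are one-to-one ``regardless of input documents,'' with the FDs supplied as external knowledge (as in the paper's Value Consistency experiment)---so no additional argument is needed there.
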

\begin{proof}
Proof by contradiction: Consider $ Z = X \join Y $ where the join is one-to-one. Assume that there exists a cell $ x $ in $ r_x \in X $ that maps to at least two distinct cells in $ Z $, namely $ z_1 $ and $ z_2 $. This requires that $r_x$ is paired with more than one row in $ Y $. For this to happen, there are two possibilities: 1) At least two rows in $ Y $ have the same values for the join keys as in $r_x$, which contradicts the functional dependency condition for the attributes of $Y$; 2) The join is a Cartesian product, which cannot happen if the functional dependencies hold unless both operands are necessarily singleton sets, which contradicts that there is more than one row in $Y$. Thus, both possibilities lead to contradictions, and the assumption must be false. 
\end{proof}
 
\subsubsection{String Selection} Predicates involving at least one variable in a binary string selection within the tree of extraction spanners can be invalidated or validated by new values, causing some rows to appear or disappear, which is undesirable. As an additional sufficient condition, we require that all binary string selection operators must not have predicates over any update variables. 
\begin{definition}\label{def:restrictedselection}
   An arbitrary core AQL program $\mathcal{P}$ has \DefVar{restricted string selection} if for all binary string selection operators, ${\zeta^{=}_{x,y}}$ in $\mathcal{P}$'s tree of extraction spanners  $\{x,y\} \cap \mathit{SVars}(\mathcal{P}_k)=\emptyset$.
\end{definition}

Next, we prove that the introduced five conditions   (Definitions~\ref{def:domaincons},~\ref{def:confree},~\ref{def:respect},~\ref{def:nonexpjoin}, and~\ref{def:restrictedselection}) provide sufficient criteria for guaranteeing the stability of a core AQL program with respect to the extracted view update model (see Section~\ref{sec:updateModel}). 
    
\begin{theorem} \label{thrm:closure}
 An arbitrary core AQL program $\mathcal{P}$ is stable with respect to an update model $U$ that respects $\mathcal{P}$'s characters if $\mathcal{P}$ 
 \begin{enumerate}
     \item is domain-consistent, 
     \item is conflict-free,
     \item  is non-expanding, and
     \item  has restricted string selection. 
 \end{enumerate}
\end{theorem}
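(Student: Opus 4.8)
The plan is to verify the four clauses of Definition~\ref{def:stable} in turn. Because $\mathcal{P}$ defines an extraction spanner, it is strict, computable, and deterministic by construction, so the only substantive obligation is the implication: re-extracting from $DSyn(D,r,i,f_i(v_i))$ must return exactly $\{r'\mid r'\in R \land r'\neq r\}\cup F(r,i)$. I would prove this implication by tracing a single update $v_i\mapsto f_i(v_i)$ from the leaf extraction formulas upward through the algebraic operators to the output spanner $\mathcal{P}_k$, showing at each stage that the updated value lands in the intended cell, that no other cell changes, and that no row is gained or lost.

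First I would establish \emph{locality at the leaf level}. Fix the updated variable $V_i$ and any formula $E\in\mathit{Prov}(V_i)$. Conflict-freeness (Definition~\ref{def:confree}), via Lemmas~\ref{lm:case1}--\ref{lm:interformula}, guarantees that the span being rewritten shares no subspan with any other span extracted as $V_i$ (Cases~I and~II), any span extracted as another variable of $E$ (Case~III), any region of $E$ designated as context (Case~IV), or any span consumed by a different formula $E'$ (inter-formula). Together with the hypothesis that $U$ respects $\mathcal{P}$'s characters (Definition~\ref{def:respect}), the substitution performed by $DSyn$ therefore alters no restricted character inside an uncovered unigram region and disturbs no marked or contextual span except the one being updated. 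Domain consistency (Definition~\ref{def:domaincons}) then supplies the positive half: since all subexpressions bound to $V_i$ across $\mathit{Prov}(V_i)$ are identical and $f_i(v_i)\in W_i$ lies in the domain they define, the rewritten span is matched by the very instance of $V_i$ that matched $v_i$. Hence every matching against $DSyn(D,r,i,f_i(v_i))$ corresponds to a matching against $D$ with offsets shifted only by $|f_i(v_i)|-|v_i|$ to the right of the edit, so each leaf relation changes in precisely the single cell holding $V_i$ for record $r$ and nowhere else.

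Next I would \emph{propagate the update through the algebraic tree}. Projection and union are immediate from leaf locality, since each output tuple originates in a single branch. For join I would invoke the non-expanding condition (Definition~\ref{def:nonexpjoin}) together with the lemma preceding it: a one-to-one join admits a partial identity map from each operand's cells to the cells of the result, so a single-cell change in an operand induces a single-cell change in the join and can neither multiply nor delete rows. For string selection, restricted string selection (Definition~\ref{def:restrictedselection}) ensures that no selection predicate mentions an update variable, so the Boolean outcome of every selection is unchanged and no row is validated or invalidated. Composing these observations up to $\mathcal{P}_k$ yields exactly $\{r'\mid r'\in R \land r'\neq r\}\cup F(r,i)$, which is stability.

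The main obstacle I anticipate lies in the propagation step, because the edited document is shared by \emph{all} leaves at once: I must argue that leaf-level locality survives every operator and that the offset shift caused by a length-changing replacement never spawns a spurious match elsewhere in the document. The join case is the most delicate, since without the one-to-one guarantee a single updated cell could fan out to several result rows or collapse into a Cartesian product; reconciling the partial-identity lemma with the locality argument is where the proof does its real work.
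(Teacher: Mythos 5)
Your proof is correct in substance but organized along a genuinely different axis than the paper's. The paper argues by contradiction, enumerating on the \emph{output view} every way stability could fail---the updated cell holding a value other than $f_i(v_i)$, another cell of the same row changing, the row vanishing, a foreign row appearing, vanishing, or changing---and disposes of each failure mode by citing the relevant hypothesis (domain consistency for the cell's value, conflict-freeness plus character-respect for all overlap-induced effects, restricted string selection for rows killed by a predicate, non-expansion for rows spawned by a join). You instead run a direct, bottom-up structural argument over the operator tree: first leaf-level locality (conflict-freeness, character-respect, and domain consistency give a matching-by-matching correspondence between $D$ and $DSyn(D,r,i,f_i(v_i))$), then preservation of locality through projection, union, join (via the partial-identity lemma for one-to-one joins), and string selection. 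The ingredients are identical and used in the same roles, so the two proofs are interderivable; what differs is what each buys. The paper's enumeration maps one-to-one onto Definition~\ref{def:stable}, so exhaustiveness of the case split is immediately visible. Your induction is more modular---each hypothesis is consumed at exactly one layer, and adding a new algebraic operator would only require one new preservation lemma---but it concentrates the real difficulty into the leaf bijection, in particular the claim that the edit spawns no \emph{new} matching. You flag this honestly as the main obstacle; note that it does go through, because conflict-freeness is a property of the spanners over \emph{all} documents (hence also over the edited document, so any new match's marked and contextual spans must avoid the edited span), and character-respect works symmetrically for $v_i$ and $f_i(v_i)$, so any new match covering the edit only by unigram regions would already have matched in $D$. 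The paper handles this same point, equally briefly, inside its new-row case, so your proposal is at the same level of rigor as the published argument.
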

\begin{proof}

The proof is by contradiction. Assume that the value of a cell, $c$,  within column $A_i$ and row $r$ is updated from $v$ to $v'$ and the source document is updated accordingly. Despite meeting all four conditions, assume that $\mathcal{P}$ is not stable. If we execute the extraction program over $DSyn(D,r,i,v')$ two cases might occur:
\begin{enumerate}
\item \label{item:localconflict} $r$ has changed in an unexpected manner. Unexpected effects might be one or more of the following possibilities: 

\begin{enumerate}
\item At least one cell other than $c$, say $c'$ of column $A_j$, has changed in $r$. This implies that the update to $c$ has modified  the contextual, unigram, or  extracted region associated with $c'$  in at least one of $A_j$'s corresponding extraction formulas, i.e.,  $Prov(A_j)$;  but $\mathcal{P}$ is conflict-free and the update respects $\mathcal{P}$'s characters, so this case is not possible.
\item Cell $c$ has a value other than $v'$. Since $v'$ has  the same domain as $v$,  $\mathcal{P}$ has a consistent domain, and the update respects $\mathcal{P}$'s characters,  the match  associated with $r$ before the update must match after the update.  Therefore, this case is not possible. 
\item  Row $r$ has disappeared. For this to occur, either: \\
1) At least one of the corresponding extraction formulas in $Prov(A_i)$ does not match the region when  $v'$ appears in place  of $v$; but this case is impossible since the domain of $\mathcal{P}$ is consistent and $v$ and $v'$ have identical domains,  $\mathcal{P}$ is free of conflicts, and the update respects $\mathcal{P}$'s characters, so $v'$ could not have changed the context of any extraction formulas associated with $r$, or
2) extraction formulas responsible for matches of $r$ still match with new value $v'$, but at least one of them has disappeared because  of applying a binary selection operator somewhere in the tree: since $A_i$ is not involved in any predicate pertaining to a binary selection, this case is not possible.   
\end{enumerate}

\item  A row other than $r$, say $r'$,  has changed after the update. There are three possibilities:
\begin{enumerate}

\item \label{item:insertion} $r'$ is a new row that  has appeared as a result of the update.   The value of one or more extracted cells in $r'$ or the underlying value of their associated context or unigrams  of correpsonding extraction formulas   essential in forming the match of $r'$   did not form a match in $D$  but does form  one in $DSyn(D,r,i,v')$.  Because the extractor is free of conflicts and the update respects characters of $\mathcal{P}$, the update could not have modified the values of those  underlying values; thus this scenario is not possible. Furtheremore, the old value $v$ was associated to one cell in the output view but the new value $v'$ corresponds to two cells due to joins. This case is also not possible since exsiting joins are non-expanding. 
\item $r'$ is a row that  has disappeared due to the update:
The value of one or more extracted cells in $r'$ or the underlying value of their associated context or unigrams in correpsonding extraction formulas    essential in forming the match of $r'$ matched in $D$  but does not match in $DSyn(D,r,i,v')$.  Because the extractor is free of conflicts, $v'$ could not have modified the values of those  underlying values, and the update  respect characters; thus this scenario is not possible.
\item $r'$ is a row that has at least one extracted value changed due to the update. Because the extractor is free of conflicts $v'$ could not have modified the extracted values. 
\end{enumerate}
\end{enumerate}
 \end{proof}

Therefore, as long as we restrict ourselves to stable extractors and domain-preserving updates over extracted relations, we can apply any desirable cleaning process directly to extracted views and automatically update the documents through a straightforward translation. The clean documents can then be used in various data processing pipelines.  
\section{Experiments}
\subsection{Dataset}
Previous researchers have used data cleaning techniques on unstructured medical data in their pipelines~\cite{DBLP:conf/cbms/DeshpandeRTFRA20, woo2019application}. Unfortunately, the experimental datasets used in those papers are not freely available; so we instead implement similar techniques over the I2B2 dataset~\cite{noauthor_I2B2:_nodate}, which was initially created for evaluating two NLP challenge tasks~\cite{DBLP:journals/jamia/UzunerLS07}. We split the discharge summary file, \verb|unannotated_records_deid_smoking.xml|, into $889$ XML files, each of which represents an individual patient's discharge summary. We name  each file as \verb|N.xml|, where \verb|N| is the unique ID taken from \verb|<RECORD ID=N>| and uniquely identifies a patient's record. Each file has various sections, including \verb|Family history|, \verb|ALLERGIES|, \verb|Admission Date|, \verb|FOLLOW-UP INSTRUCTION|, etc. Although our proposed approach is independent of any structure in the data, the marking of these predefined sections simplifies and improves the accuracy of our extractors. An example of a constructed file is shown in Figure~\ref{fig:record}. Each file may vary slightly in structure; for example, not all records include a section for FOLLOW-UP INSTRUCTIONS. For simplicity, we focused our experiments on identifying and capturing only the most common patterns when designing our example extractors. Similar approaches could be used to clean the remaining data.

\begin{figure}
    \setlength{\fboxsep}{3pt} \centering
    \fbox{\begin{minipage}{0.50\textwidth}
             \sffamily \small
\textbf{<RECORD ID="547">}\\
\textbf{<TEXT>} \\
506243692\\
\textbf{FIH}\\
6305145\\
$\dots$\\

\textbf{ADMISSION DATE :}\\
09/11/2002\\
\textbf{DISCHARGE DATE :}\\
09/15/2002\\
$\dots$\\
\textbf{FOLLOW-UP INSTRUCTIONS :}\\
The patient will be seen in follow-up  by Dr. Warm one week post-discharge and by Gie Lung in the Fairm of $\dots$\\
\textbf{DD :}\\
09/15/2002\\
$\dots$\\
\textbf{</TEXT>}\\
\textbf{</RECORD>}
             \rmfamily
    \end{minipage}}\caption{Excerpt from the constructed file for record ID 547.}\label{fig:record}
    \end{figure} 

\subsection{Updatable Extracted Views as Data Cleaning Mechanisms}

Deshpande et al. have shown the significance of document cleaning in improving the performance of a retrieval system~\cite{DBLP:conf/cbms/DeshpandeRTFRA20}. Three cleaning processes were applied, including \textit{correcting errors and inconsistent values}, \textit{inserting missing data,}  and \textit{unifying abbreviations}, all of which are supported by our framework. 

In this section we present seven core AQL programs to demonstrate how our proposed method can be applied to clean the I2B2 dataset. We first present the primitive constructs used in our AQL programs, introduce the semantics of each cleaning extractor,  discuss the verification results, and provide statistics on each program.

Of course, the example extractors are designed to address the specific data quality issues found in this dataset, such as the difference in formatting between birth dates and admission dates. For other datasets, alternative extractors would need to be developed to address their particular quality issues. 

\subsubsection{Primitive Constructs} For readability, the  extractors are built on top of some simple grammar building blocks: 

\begin{tabular}{l l l}
\hspace{-0.5cm} \textbf{character sets:} \\ \small
$\gamma_{u}= [\regtext{A},\regtext{Z}]$ & $\gamma_{l}= [\regtext{a},\regtext{z}]$ & $\gamma_{d}=[\regtext{0},\regtext{9}]$ \\
$\gamma_b = \regtext{\smallunderscore} \vee \regtext{,}$ &  \multicolumn{2}{l}{$\gamma_{p} = (\regtext{:} \vee \regtext{,} \vee \regtext{;}  \vee \regtext{!} \vee \regtext{.} \vee \regtext{?})$} \\
$\gamma_{delim} = ( \regtext{/} \vee \regtext{-})$ & $\gamma_{s} = (\Sigma - \regtext{?}-\regtext{.})$ \\
 \end{tabular}

\begin{tabular}{r@{ } l r@{ } l}
\multicolumn{3}{l}{\hspace{-0.5cm} \textbf{basic patterns:}} \\ \small 
$\gamma_{\mathit{nd}}=$ & $\overbrace{\gamma_{d} \cdots\gamma_{d}}^n$ \\
$\gamma_{\mathit{date1}}=$&$\gamma_{\mathit{4d}} \gamma_{\mathit{delim}} \gamma_{\mathit{2d}} \gamma_{\mathit{delim}} \gamma_{\mathit{2d}} $ & $\gamma_{\mathit{date2}}=$&$\gamma_{\mathit{2d}}\gamma_{\mathit{delim}} \gamma_{\mathit{2d}}\gamma_{\mathit{delim}} \gamma_{\mathit{4d}}  $ \\
$\gamma_{\mathit{date3}}=$&$\gamma_{2d}\gamma_{\mathit{delim}} \gamma_{2d} \gamma_{\mathit{delim}} \gamma_{5d}$ & $\gamma_{\mathit{Date}}=$ & $\gamma_{\mathit{8d}} \vee \gamma_{\mathit{date1}} \vee \gamma_{\mathit{date2}} \vee \gamma_{\mathit{date3}} $ \\
$\gamma_{\mathit{Abr}}=$&$\gamma_{u}\gamma_{u}\gamma_{u}^*$ &$\gamma_{Comp}$=&$\gamma_{u}\gamma_{l}\gamma_{l}*$($\regtext{\smallunderscore}\gamma_{u}\gamma_{l}\gamma_{l}^*)^*$\\
$\gamma_{\mathit{Word}}=$&$(\gamma_{u} \vee \gamma_{d})(\gamma_{u} \vee \gamma_{d})^*$\\
  \\
\end{tabular} \normalsize

\noindent Finally, for some of the extractors, we use a few small word lists: 
\small
\begin{align*}
    \gamma_{verb} &= \regtext{recommend} \vee \regtext{recommended}  \vee \regtext{have} \vee \regtext{had} \vee \regtext{receive} \vee \regtext{received} \vee \regtext{perform} \vee \regtext{performed} \\
    \gamma_{sec1}&=\regtext{Discharge\smallunderscore Date} \vee \regtext{DISCHARGE\smallunderscore DATE} \\
    \gamma_{sec2}&=\regtext{MEDICATIONS\smallunderscore ON\smallunderscore DISCHARGE} \vee \regtext{Discharge\smallunderscore Medications} \vee \regtext{DISCHARGE\smallunderscore MEDICATIONS}
\end{align*}
\normalsize
\begin{table*}[btp] \caption{AQL Programs for cleaning I2B2 Dataset. Update variables are highlighted in \textbf{bold}.}   \label{tbl:aql}
\centering 
\begin{adjustbox}{width=1\textwidth} 
\renewcommand{\arraystretch}{1.5} 

\begin{tabular}{r@{ } l }  
         \hline 
\multicolumn{2}{l}{\textbf{Date Formatting Consistency and Correctness}} \\

$E_1=$ & $\llbracket \Sigma^* (\regtext{\smallunderscore} \vee \gamma_p) (\regtext{admitted} \vee  \regtext{Admitted\smallunderscore on}) \regtext{\smallunderscore} \textbf{D}\{\gamma_{\mathit{Date}}\}(\regtext{\smallunderscore} \vee \gamma_p)\Sigma^* \rrbracket$ \\
$E_2=$& $ \llbracket \Sigma^* (\regtext{ADMISSION\smallunderscore DATE}\vee\regtext{Admission\smallunderscore Date }) \regtext{\smallunderscore:\textbackslash n}\textbf{D}\{\gamma_{\mathit{Date}}\}\regtext{\textbackslash n}\Sigma^*\rrbracket$\\ 
$E_3=$& $ \llbracket \Sigma^* (\regtext{DISCHARGE\smallunderscore DATE}\vee\regtext{Discharge\smallunderscore Date }) \regtext{\smallunderscore:\textbackslash n}\textbf{D}\{\gamma_{\mathit{Date}}\}\regtext{\textbackslash n}\Sigma^*\rrbracket$\\ 

$E_{\mathit{date}}=$& $  E_1 \cup E_2 \cup E_3 $\\
\hline
\multicolumn{2}{l}{\textbf{Age Correctness}} \\

$E_{4}=$&$ \llbracket \Sigma^* \regtext{\smallunderscore }\textbf{A}\{\gamma_{d} \vee \gamma_{2d} \vee \gamma_{3d} \}(\regtext{\smallunderscore} \vee \regtext{-}) \regtext{year} (\regtext{\smallunderscore} \vee \regtext{-})\regtext{old\smallunderscore}\Sigma^*\rrbracket$ \\
$E_{\mathit{age}}=$& $  E_4  $\\
\hline

\multicolumn{2}{l}{\textbf{Value Consistency}} \\

$E_{5}=$& $\llbracket\Sigma^{*} 
   \regtext{<RECORD\smallunderscore ID=``}R\{\gamma_d\gamma_{d}^*\}\regtext{''>\textbackslash n<TEXT>\textbackslash n}(\Sigma-\regtext{<})^* \regtext{\textbackslash n}T1\{\gamma_{sec1}\}\regtext{\smallunderscore:\textbackslash n}  \textbf{D1}\{\gamma_{Date}\} \regtext{\textbackslash n}\Sigma^*\rrbracket$\\
    $E_{6}=$&$\llbracket\Sigma^{*} 
     \regtext{<RECORD\smallunderscore ID=``}R\{\gamma_d\gamma_{d}^*\}\regtext{''>\textbackslash n<TEXT>\textbackslash n}(\Sigma-\regtext{<})^* \regtext{\textbackslash n}T2\{\regtext{DD} \vee \regtext{D}\}\regtext{\smallunderscore:\textbackslash n}  \textbf{D2}\{\gamma_{Date}\} \regtext{\textbackslash n}\Sigma^*\rrbracket$\\
$E_{\mathit{valCon}}=$& $E_{5} \join E_{6} $\\
\hline
\multicolumn{2}{l}{\textbf{Order Dependency}} \\

$E_7=$&$\llbracket \Sigma^* \regtext{<RECORD\smallunderscore ID=``}R\{\gamma_d\gamma_{d}^*\}\regtext{''>\textbackslash n<TEXT>\textbackslash n}(\Sigma- \regtext{<})^*(\regtext{\smallunderscore} \vee \gamma_p) A\{ (\regtext{admitted} \vee  \regtext{Admitted\smallunderscore on}) \}\regtext{\smallunderscore} \textbf{D1}\{\gamma_{\mathit{Date}}\}(\regtext{\smallunderscore} \vee \gamma_p)\Sigma^* \rrbracket$\\
   $E_8=$&$\llbracket \Sigma^* \regtext{<RECORD\smallunderscore ID=``}R\{\gamma_d\gamma_{d}^*\}\regtext{''>\textbackslash n<TEXT>\textbackslash n}(\Sigma- \regtext{<})^* A\{(\regtext{ADMISSION\smallunderscore DATE}\vee\regtext{Admission\smallunderscore Date }) \}\regtext{\smallunderscore:\textbackslash n}\textbf{D1}\{\gamma_{\mathit{Date}}\}\regtext{\textbackslash n}\Sigma^*\rrbracket$\\
$E_9=$&$ \llbracket \Sigma^* \regtext{<RECORD\smallunderscore ID=``}R\{\gamma_d\gamma_{d}^*\}\regtext{''>\textbackslash n<TEXT>\textbackslash n}(\Sigma- \regtext{<})^*S\{(\regtext{DISCHARGE\smallunderscore DATE}\vee\regtext{Discharge\smallunderscore Date }) \}\regtext{\smallunderscore:\textbackslash n}\textbf{D2}\{\gamma_{\mathit{Date}}\}\regtext{\textbackslash n}\Sigma^*\rrbracket $\\
$E_{order}=$&$E_9 \join (E_7 \cup E_8)$\\
\hline
\multicolumn{2}{l}{\textbf{Medication List Formatting}} \\
$E_{10}=$&$\llbracket\Sigma^{*} \gamma_{sec2}\regtext{\smallunderscore:\textbackslash n}(\gamma_d \vee \gamma_{u} \vee \gamma_{l} )  (\Sigma-\regtext{:})^* S\{(\regtext{\smallunderscore} (\regtext{,}\vee \regtext{;})\regtext{\smallunderscore} )\vee \regtext{\textbackslash n} \}(\gamma_{u} \vee \gamma_{l} \vee \gamma_{d})  \Sigma^*\rrbracket$\\
$E_{list}=$&$E_{10}$\\
\hline
\multicolumn{2}{l}{\textbf{Missing Units}} \\

$E_{11}=$&$\llbracket\Sigma^{*} \gamma_{sec2}\regtext{\smallunderscore:}(\Sigma-\regtext{:})^*\regtext{\textbackslash n} M\{ \gamma_{Comp}\}\regtext{\smallunderscore} G\{\gamma_d\gamma_{d}^*\}  U\{\regtext{\smallunderscore} (\gamma_l \vee \regtext{\smallunderscore} \vee \gamma_d \vee \regtext{.})^* \} \regtext{\textbackslash n} \Sigma^*\rrbracket$\\
$E_{unit}=$&$E_{11}$\\
\hline
\multicolumn{2}{l}{\textbf{Unifying Abbreviations and Full Forms} } \\
$E_{12}=$& $\llbracket \Sigma^* \regtext{\smallunderscore} \gamma_{verb} \regtext{\smallunderscore}(\gamma_{\mathit{Word}}\regtext{\smallunderscore})^*\textbf{U}\{\gamma_{\mathit{Abr}} \vee \gamma_{\mathit{{Comp}}}\} \gamma_b(\gamma_{p} \vee \gamma_l \vee \gamma_d) \Sigma^*\rrbracket$\\ 
$E_{\mathit{unify}}=$& $E_{12}$\\
\hline

    \end{tabular}
\end{adjustbox}
    \label{fig:uvx_extractors}
\end{table*}

\normalsize
\subsubsection{Data cleaning Semantics} We address potential data quality issues commonly found in free-text electronic medical records: 
 
\begin{description}
    \item [Date Formatting Consistency and Correctness] We observe inconsistencies in the formatting of dates in various places. Some dates are expressed as separated and grouped six-digit patterns, while others are simple eight-digit strings. Different separators, such as hyphens ($\regtextinl{-}$) and slashes ($\regtextinl{\slash}$), have been used. While we have not observed any errors in the date values, such as having five-digit years, our program handles some potential inaccuracies in values as well.  
    \item [Age Correctness] We have not observed any out-of-range patient age values, such as $168$, in the I2B2 dataset. Since the dataset is de-identified, it is possible that such errors were eliminated during the anonymization process. Nevertheles, we also illustrate how to clean any such erroneous values, should they appear.

    \item [Value Consistency]  A patient's discharge date might be recorded in two positions within each record, namely in the \verb|DD|  or \verb|Discharge Date| sections. These values should be consistent, and any discrepancies must be corrected.

    \item [Order Dependency] There is an inherent order dependency~\cite{DBLP:journals/pvldb/SzlichtaGG12} in the dataset: within each record, the discharge date cannot be earlier than the admission date.
    $$\textit{admission date} \mapsto \textit{discharge date}$$
    We illustrate how to resolve any violations of this dependency. 

    \item [Medication List Formatting] The list of medications prescribed at discharge is recorded inconsistently, sometimes as a structured list and other times as a sequence of words. We illustrate how to standardize the formatting across all records by presenting all discharge medications as proper list items. 

    \item [Missing Units]
    We also illustrate how to identify and correct instances where the dosage unit, e.g., mg (milligram), of a prescribed medication is not recorded at the time of discharge.\footnote{It is reported that nearly 50\% of all medication errors occur during the prescribing or ordering process, which can elevate the risk of patient mortality~\cite{tariq2018medication}.}
    
    \item [Abbreviation] Finally, the dataset contains many instances  of medical terms in both abbreviated and full forms; for example, we find  \regtextinl{...had\smallunderscore been\smallunderscore on\smallunderscore Macrodantin\smallunderscore suppressive\smallunderscore therapy\smallunderscore preoper} \regtextinl{atively...} 
   as well as
    \regtextinl{...recommended\smallunderscore that\smallunderscore an\smallunderscore IVP} \regtextinl{\smallunderscore be\smallunderscore obtained...}.  For downstream tasks, it may be beneficial to unify these instances (either to all full forms or to abbreviations)~\cite{DBLP:conf/cbms/DeshpandeRTFRA20}. 
\end{description}

\subsubsection{Cleaning Extractors} The core AQL programs designed to address the identified data quality problems in the I2B2 dataset are presented in Table~\ref{tbl:aql}. In each program, we denote extraction formulas as $E_k$, where $k$ is an integer, and output spanners as $E_s$, where $s$ is a string. For example the \textbf{Date Formatting Consistency and Correctness} program comprises three extraction formulas $E_1$, $E_2$, and $E_3$. The  output spanner, $E_{date}$, is expressed as the union of the three extraction formulas   where  $\textbf{D}$ is the program's only update variable. 

\subsubsection{Verification and Cleaning}

If the proposed verifier determines that a program is stable, any domain preserving update made to the corresponding extracted views can be transfered back to their original locations in the source documents. Therefore, any data cleaning method applied to the extracted table can be translated back to the originating documents, as long as the method is domain preserving, effectively cleaning the text data once and for all. 
 \begin{description}
     \item[Date Formatting Consistency and Correctness]   The verifier finds this program stable. A total of 950 instances were matched by $E_{\mathit{date}}$ across 487 matched documents. These documents exhibit two distinct formatting categories. Table~\ref{tbl:dateformatting} shows two of the extracted records along with their document IDs and corresponding spans. If our goal is to clean the documents by standardizing the date format to \verb|yyyy-mm-dd|, after transferring the 950 updated values back to their corresponding locations in the source documents, running $E_{\mathit{date}}$ over the updated documents results in extracting records that include those shown in Table~\ref{tbl:dateformatting_reextraction}, as expected.

    \item[Age Correctness] This is also a stable program. $E_{age}$ identified 513 age instances across 462 matched documents. As noted earlier, all identified instances fall within the valid age range, so no cleaning is required.
    
    \item [Value Consistency:]   This program passes four verification tests, namely the program is domain-consistent and conflict-free, its corresponding update translation respects characters, and it does not have any problematic string selection operator.  The presence of a join in the tree of extraction spanners  requires us to examine whether the program is non-expanding as well. The set of join keys $\{\textbf{R}\}$ is non-empty, and it is known that  all other attributes, such as $\textbf{D1}$, $\textbf{T1}$,  $\textbf{D2}$, and $\textbf{T2}$,  are functionally dependent on $\textbf{R}$.  Therefore, the program is found to be stable. A total of 107 records are identified by $E_{\mathit{valCon}}$ across 107 documents. Whichever values a cleaning program run on the table chooses to update, they can be transferred back to the documents to remove those inconsistencies.
    
    \item [Order Dependency] Admission dates are explicitly recorded in two places: in the \verb|Admission| \verb|Date| section (handled by $E8$) and within the explanations of other sections, such as \verb|Hospital Course| \verb|and Treatment| (identified using $E7$). In the case of E8, it is known that the dependency of $\textbf{A}$ (and $\textbf{D1}$ ) on $\textbf{R}$ is not functional. Therefore, despite passing the first four tests, $E_{order}$ cannot be shown to be non-expanding and, consequently, it is not stable. However, we can accomplish the same cleaning goal by modifying the variable names in $E_9$ to make it union compatible with $E_7$ and $E_8$ and redefine the output spanner as $E'$ where $E'= E_7 \cup E_8  \cup E_9$. Since $E'$ is determined to be stable.\footnote{An interesting research question that arises from this case is whether stable programs containing joins can always be expressed using other algebraic operators.} A total of 487 documents are matched using $E'$. With the exception of 24 cases where our extractors do not match the admission date or do not match the discharge date, all other instances respect the dependency rule.
    
    \item [Medication List Formatting] This program is detemined to be stable. A total number of 489 document match $E_{list}$ from which 2602 relational records can be extracted. There are 1,252 instances where we needed to replace \regtextinl{,} and \regtextinl{;} with \regtextinl{\textbackslash n} to create a proper list of medications. We emphasize once again that more accurate extractors could be designed to capture a greater range of data quality problems.

    \item [Missing Units] By extracting each numeric quantity and the remainder of the medication line (and assuming that \textbf{Medication List Formatting} has been corrected), we can identify whether the word following the quantity represents some dosage units, such as \regtext{mg} or \regtext{tsp}. If not, we can insert the units by updating the corresponding \textit{U} cell. 
    A total of 113 documents matched $E_{unit}$, resulting in 304 extracted records, of which 18 were missing the specified units. 
    (It is perhaps more intuitive to extract the units if present---i.e., the next word only if it represents a valid unit abbreviation---and otherwise extract the empty string $\epsilon$. However, the verifier would reject such an extractor for being unstable, because $CaseIV$ would uncover a problematic overlap between regions marked as $\textbf{U}$ and those marked as context.) 

    \item[Abbreviation] $E_{\mathit{unify}}$  identifies 922 instances of medical terms in both full and abbreviated forms across 396 documents. There are instances, such as \regtextinl{Clinchring\smallunderscore Health\smallunderscore Center} or \regtextinl{February}, that are identified due to the lower precision of our extractors but are not of interest. These instances would remain unchanged by a cleaning program. Some of the extracted and potentially cleansed records are depicted in Tables~\ref{tbl:unifyoriginal} and \ref{tbl:unifycleaned}.  Note that \regtextinl{Intensive\smallunderscore Care\smallunderscore Unit} extracted form 819.xml need not be updated if full forms are preferred, however  updating \regtextinl{CT} changes the length of text comming before it, consequently the span for \regtextinl{Intensive\smallunderscore Care\smallunderscore Unit} in the re-extracted relation is shifted.

 \end{description}
 
\begin{table*}[tb] 
\caption{ These tables show some of the extracted records from uncleaned documents  on the left side, and their corresponding re-extracted records from cleaned documents  on the right side. The name of the cleaned version of a document is in \textit{Italic}. 
} \label{tab:experiment}
\begin{subtable}{.4\textwidth}
\caption{Extracted Records for $E_{\mathit{date}}$ } \label{tbl:dateformatting}
    \centering
    \begin{adjustbox}{width=0.8\textwidth} 
    \begin{tabular}{@{}c|c|ccc} 
Doc	&	Span	&	String Value		\\
\hline										
45.xml	&	$[261,269\rangle$	&	\regtext{20050305}	\\
	&	$[235,243\rangle$	&	\regtext{20050301}	\\
\hline
701.xml	&	$[216,226\rangle$	&	\regtext{05/16/2004}	\\
	&	$[244,254\rangle$	&	\regtext{05/26/2004}	\\
	&	$[2290,2300\rangle$	&	\regtext{05/16/2004}	\\
 \end{tabular}
    \end{adjustbox}
\end{subtable}
~
\begin{subtable}{.4\textwidth}
\caption{Re-extracted Records for $E_{\mathit{date}}$ } \label{tbl:dateformatting_reextraction}
    \centering
    \begin{adjustbox}{width=0.76\textwidth} 
  \begin{tabular}{@{}c|c|ccc} 
Doc	&	Span	&	String Value		\\
\hline										
\textit{45.xml}	&	$[263,273\rangle$	&	\regtext{2005-03-05}	\\
	&	$[235,245\rangle$	&	\regtext{2005-03-01 }	\\
\hline
\textit{701.xml}	&	$[216,226\rangle$	&	\regtext{2004-05-16} 	\\
	&	$[244,254\rangle$	&	\regtext{2004-05-26}	\\
	&	$[2290,2300\rangle$	&	\regtext{2004-05-16} 	\\
 \end{tabular}
    \end{adjustbox}
\end{subtable}
    \bigskip

\begin{subtable}{.4\textwidth}
\caption{Extracted Records for $E_{\mathit{unify}}$ } \label{tbl:unifyoriginal}
    \centering
    \begin{adjustbox}{width=0.9\textwidth} 
    \begin{tabular}{@{}c|c|ccc} 
Doc	&	Span	&	String Value		\\
\hline										
43.xml	&	$[1506,1508\rangle$ &	\regtext{CT}	\\
	&	$[702,705 \rangle$ &	\regtext{MRI}	\\
\hline										
819.xml	&	$[1035,1037\rangle$ &	\regtext{CT}	\\
	&	$[1305,1324\rangle$ &	\regtext{Intensive\smallunderscore Care\smallunderscore Unit}	\\
\hline
575.xml	&	$[2073,2084\rangle$ &	\regtext{Pseudomonas}	\\
	&	$[2474,2477\rangle$ &	\regtext{DVT}	\\
	&	$[2518,2521\rangle$ &	\regtext{TPA}
    \end{tabular}
    \end{adjustbox}
\end{subtable}
~
\begin{subtable}{.45\textwidth}
\caption{Re-extracted Records for $E_{\mathit{unify}}$} \label{tbl:unifycleaned}
    \centering
    \begin{adjustbox}{width=0.95\textwidth} 
    \begin{tabular}{@{}c|c|ccc} 
Doc	&	Span	&	String Value		\\
\hline										
\textit{43.xml}	&	$[1529,1548\rangle$ &	\regtext{Computed\smallunderscore Tomography}\\
	&	$[702,728 \rangle$ &	\regtext{Magnetic\smallunderscore Resonance\smallunderscore Imaging}	\\
\hline										
\textit{819.xml}	&	$[1035,1054\rangle$ &	\regtext{Computed\smallunderscore Tomography}	\\
	&	$[1323,1342\rangle$ &	\regtext{Intensive\smallunderscore Care\smallunderscore Unit}	\\
\hline
\textit{575.xml}	&	$[2073,2084\rangle$ &	\regtext{Pseudomonas}\\
	&	$[2474,2494 \rangle$ &	\regtext{Deep\smallunderscore Vein \smallunderscore Thrombosis}	\\
&	$[2535,2563\rangle$ & \regtext{Tissue\smallunderscore Plasminogen\smallunderscore Activator}
    \end{tabular}
    \end{adjustbox}
\end{subtable}\end{table*}

\section{Related Work}

\subsection{Data cleaning}
Data cleaning plays a vital role in enhancing data quality by identifying and rectifying inconsistencies, errors, and redundancies. This process ensures that the data is more accurate and reliable, ultimately supporting better analysis and  decision-making~\cite{DBLP:reference/bdt/Chu19}.   Extensive research has explored various aspects of data cleaning, including addressing key issues by  deduplication, outlier detection, and logical errors~\cite{DBLP:books/acm/IlyasC19}, as well as enhancing the scalability of cleaning algorithms~\cite{DBLP:phd/basesearch/Saxena21}. Further efforts have focused on developing specialized algorithms tailored to the semantics of specific datasets such as time series~\cite{DBLP:journals/pvldb/DingWSLLG19}, medical~\cite{DBLP:conf/cbms/DeshpandeRTFRA20}, Chinese-language e-business~\cite{5917038}, and GPS  trajectory~\cite{DBLP:conf/dasfaa/LiCLB20} datasets; To achieve these objectives, various approaches have been employed, including  rule-based methods~\cite{DBLP:phd/us/Ebaid19, DBLP:conf/icde/ChuIP13}, probabilistic techniques~\cite{DBLP:conf/icde/QinHWZZM0O024,DBLP:journals/fcsc/LiLLC19}, or hybrid models combining both~\cite{DBLP:journals/pvldb/RekatsinasCIR17}. More sophisticated requirements, such as privacy considerations, have also been introduced into the data cleaning process~\cite{DBLP:journals/is/HuangMC20}. Additionally, some researchers have proposed incorporating user input into the cleaning process~\cite{DBLP:journals/jdiq/PereiraFLG24}. A large amount of data exists in semi-structured and unstructured formats. However, limited effort has been devoted to addressing the data quality issues inherent in these formats~\cite{DBLP:journals/pvldb/ChuI16}, which is the primary focus of our work.   Ilyas and Chu provide a comprehensive review of data cleaning~\cite{DBLP:books/acm/IlyasC19}.

\subsection{Updatability of Extracted Views} 

Expectations from extractors have risen as requirements have become more diversified, from the point that there were no  criteria to evaluate their performance~\cite{gaizauskas1998information} to the point that extraction algorithms need to work under various stresses such as noisy data, low response time, and diverse types of input and output~\cite{DBLP:journals/ftdb/Sarawagi08}. The body of research related to the problem of updatable views over unstructured data is relatively limited. However, we can draw insights from existing work on updatable views over semi-structured data, such as XML, which is slightly relevant to our study. Similar to the relational setting, creating views over XML databases can provide various advantages, including faster query processing and convenient access control over specific sections of a larger XML database \cite{DBLP:conf/vldb/AbiteboulMRVW98}. Kozankiewicz et al.~\cite{DBLP:conf/adbis/KozankiewiczLS03} propose to incorporate  information about forseeable updates over views into the  view definition. Therefore,  the ultimate affects of updates are specified solely by the query designer which, if not verified, might leave the XML database in an incorrect state .  
\subsection{Updatability of Relational Views}
Our research question involves translating updates on  the content of an extracted view to updates on the content of the associated  document. This is similar to the problem of updatability of relational views that is thoroughly studied in relational databases~\cite{DBLP:books/sp/kimrb85/FurtadoC85, DBLP:journals/is/FurtadoSS79,DBLP:journals/computer/Keller86,DBLP:journals/algorithmica/MedeirosT86, DBLP:journals/pvldb/MeliouGS11}. 

In the relational setting, the problem of a view updates is defined as finding a translation of a view update   to a database update  such that running the same view definition query on the updated relation produces the updated view regardless of the database instance. Interesting research  challenges are raised from this definition such as how to deal with the problem of multiple possible translations? Are views always updatable? if not how to identify views that cannot be updated? how to derive a specific translation mechanism for a given view definition, database schema, and update specification?

Two general approaches are proposed for updatability of relational views.  First, along with a view definition, all authorized updates and their corresponding translations should be provided. However, the provided translation mechanism also needs to be verified. The second approach is to exploit the information provided by the view definition, the update mechanism, and database constraints to derive conditions on the legitimacy of a translator. For example, the view dependency graph that is constructed using only the view definition and database schema is used to verify a translator for some classes of deletions, insertions, and replacements~\cite{DBLP:journals/tods/DayalB82}. Our solution to the extracted view updates is aligned with the latter approach.  We limit ourselves to a class of view updates that is realized by a domain-preserving function that maps each extracted value to a value from the same domain, similar to perturbing values of a table to protect privacy.  However,  we do not impose any constraints on the input documents. 
We pick the most natural translation which is to substitute old values in the source document with new values, and we expect to see them extracted by running the same extractor (Figure~\ref{fig:exvu}).\par
In summary, applying solutions proposed in the relational setting to the information extraction domain poses significant challenges. The relational setting benefits from various constraints, including schema-based constraints such as data types, referential integrity constraints, key constraints, and functional dependency constraints, among others. These constraints serve to structure and regulate the problem space. However, in the context of information extraction, such constraints are generally not present: there are usually no inherent limitations on the content of the source documents.

\subsection{Rule-based  versus LLM-based Extraction}\label{subsubsec:whyrulebase}

When using extractors as document cleaning tools, it is essential to trace updates back to the original documents and predict how changes in the documents will affect the extracted relations. We meet these two requirements through deterministic and computable extractors. While we can ensure these properties with rule-based extractors, it remains unclear how to enforce such guarantees in extractors based on pre-trained language models (PLMs), including large language models (LLMs):

\begin{description}
    
\item [Computable Extractor:] Being a strict extractor implies that the extracted items must occur in the input text, and computability requires that the extraction process is capable of generating the necessary provenance. Because, not all instances of a term or phrase in the source document may correspond to those that are extracted, we require a mechanism for identifying the corresponding positions, i.e., fine-grained data lineage. Extractors written in certain rule-based extraction languages, such as AQL (used in SystemT) and JAPE (used in GATE),  are inherently computable, i.e., the lineage of extracted items are available as a by-product of the extraction process. However, extractors expressed as PLMs do not come with these inherent capabilities. PLMs are  complex and   perceived as black-box solutions. As a result, a significant body of research is dedicated to inventing novel techniques to explain how PLMs operate~\cite{DBLP:journals/corr/abs-2309-01029}. The fine-grained lineage of extracted items can be considered a \textit{local explanation} mechanism,\footnote{For a comprehensive overview of explainability,  refer to~\cite{DBLP:journals/corr/abs-2309-01029}.} which aims to provide insight into how a model responds to a specific input instance. Several techniques belong to this category, among which explanations based on \textit{attribution} might offer a viable mechanism for pinpointing  corresponding positions in the input. An attribution-based explainer assigns a relevancy score to each input word, highlighting its contribution to generating the output (extracted items, in the extraction case). In our work,  computability is treated as a non-probabilistic property, whereas lineage based on attribution is inherently probabilistic~\cite{DBLP:journals/corr/abs-2305-06311}, so current PLMs do not meet our requirement for computability.

\item [Deterministic Extractor:] The generated outputs in PLMs are the outcomes of a combination of multiple stochastic and/or heuristic steps. Consequently, running a PLM multiple times for the same input can yield different responses~\cite{DBLP:journals/corr/abs-2307-15343,GPTRep}, which characterizes LLMs as non-deterministic extractors. Again they do not meet the requirements required for extracting updatable views.  

\end{description}

\subsection{Fine-grained Data Lineage} 
Data lineage, or provenance, has been defined and formalized for structured and semi-structured data~\cite{DBLP:conf/icdt/BunemanKT01,DBLP:journals/ftdb/CheneyCT09}. Given a value that is the outcome  of executing a well-defined query over some data sources, often relational tables, provenance determines three aspects related to the value: data points in the source that contribute to form the value, the way that data points collaborate to produce the value, and the exact location(s) in the data source from which the value originates. The last aspect is similar to the notion of lineage that we use in this work, i.e., we require the extractor to provide the positions in a document from which a value is extracted.

Provenance-based techniques have also been applied to information extraction problems. Roy et al.~\cite{roy2013provenance} propose a provenance-based technique to improve the quality of extraction by refining the dictionaries that are used in a rule-based extraction system. A set of entries from the dictionaries that have been involved in generating the output are analyzed to determine which should be removed to improve the extractor's performance most.  In other work, Liu et al.~\cite{liu2010automatic} use provenance techniques to determine the most effective rule refinements, i.e., those that result in removing undesirable tuples and keeping correct ones.
Chai et al.~\cite{DBLP:conf/sigmod/ChaiVDN09} examine the provenance of a multi-stage extraction program that can include relational operators on intermediate tables. Users' feedback is expressed as updates over tuples that appear at any stage of the extraction process, and these updates are translated into modifications of the corresponding extraction program.    

\subsection{Static Analysis of Programs Using Regular Languages}
 We use an extended form of finite-state automata to statically analyse an extraction program and the update mechanism. Similar static analyses of regular expressions or deterministic finite automata   have been used in diverse areas, including  access control, feature interactions, and vulnerability detection of programs. For example, Murata et al.~\cite{DBLP:journals/tissec/MurataTKH06} propose an automaton-based access control mechanism for XML database systems. Regular expressions are derived  from given queries, access-control policies, and  schemas. Based on the characteristics of the derived automata, element/attribute level access requests  by  queries are determined to be either  granted, denied, or statically indeterminate, independently of any actual input XML documents. An event-based framework is introduced by Kin et al.~\cite{DBLP:conf/chi/KinHDA12} for developing and maintaining new gestures that can be used in multi-touch environments. Using that framework, application developers express each gesture as a regular expression over some predefined touch events. Regular expressions associated with  gestures are then statically analyzed to identify possible conflicts between various gestures. Yu et al.~\cite{DBLP:journals/fmsd/YuABI14} present a method for detecting security vulnerabilities in programs that use string manipulation operators such as \emph{concatenation} and \emph{replacement}. In essence, their approach involves constructing DFAs to represent the program's data dependency graph. Subsequently, they perform static analysis on the provided attack pattern, expressed as a DFA, and the graph's DFAs to detect potential vulnerabilities.
 Dynamiclly generated SQL queries can enhance the flexibility of programs, written in  JAVA or other languages,  by allowing them to adapt to changing conditions and requirements without the need to write multiple static queries. However, the host compiler, i.e., JAVA compiler,  does not test the generated query strings for possible errors such as type errors.

 To this end, Wassermann et al.~\cite{DBLP:journals/tosem/WassermannGSD07} propose a static analyzer to verify the correctness of dynamically constructed SQL queries embedded in programs. Their approach involves creating a DFA representation of the generated query strings and   performing static analysis on the  DFA.

\section{Conclusions and Future Work}
We propose and develop a framework for systematically cleaning documents, utilizing specialized extractors that ensure consistency and predictability throughout the entire process.
We characterize extraction algorithms that are resilient to changes in their source documents intended to reflect predetermined changes to the extracted relations, i.e.,  \emph{stable} extractors. 
We further propose a straightforward algorithm that modifies the input document considering the stable extractor and a set of domain preserving functions. If the modified document is fed into the extractor, it will produce the expected updated relation. Moreover, we propose a verification process to ensure the stability of programs written in the AQL language. The verifier tests four sufficient conditions. Through experiments, we demonstrate that the sufficient conditions are highly likely to hold in practice.

 We have made some simplifying assumptions in our work, each of which can be modified or eliminated to expand the class of programs deemed to be stable. For example, we have assumed independence between extracted attributes, thus requiring that at most one extracted value can be affected by each change in the source document. What if instead we are given constraints among the attributes, such as $A_2$ and $A_5$ must be identical or must have (computably) dependent values? We have also assumed that each table attribute can be given a value from a single span in the input document. What if several words or phrases from multiple places can be combined to create an extracted value? Loosening our simplifying assumptions might result in being able to verify more useful cleaning extractors.

We have presented a property verification process applicable to programs expressed by a subset of AQL. 
We have designed the complete verification process and proved its correctness. However, AQL is a broad language with  many more  operators and features beyond what is covered in this work.  \par 
We have developed a set of sufficient properties for stable Core AQL programs, but we have not investigated which properties might be necessary. We might also wish to explore whether some programs that cannot be verified as stable can be transformed into ones that possess the required properties of stability.

A natural extension for this problem involves developing verification tools for other rule-based extraction languages, such as JAPE, in case those extractors fit more naturally with particular applications. 
Finally, we are also interested in developing cleaning tools that can be used with extractors that are not based on regular languages, including those based on machine learning technology.

\newpage

\bibliographystyle{plainurl}
\bibliography{IE-ref}

\end{document}